\newcommand{\removelatexerror}{\let\@latex@error\@gobble}
\newtheorem{theorem}    {Theorem}
\newtheorem{prop} {Proposition}
\newtheorem{remark}     {Remark}
\theoremstyle{definition}
\newtheorem{assm} {Assumption}
\newcommand{\nc}{\newcommand}
\nc{\R}{{\mathbb R}}
\nc{\C}{{\mathbb C}}
\nc{\Z}{{\mathbb Z}}
\nc{\N}{{\mathbb N}}
\nc{\s}{\bar{\bf s}}
\nc{\I}{{\cal I}^*}
\nc{\e}{\boldsymbol{\mathsf e}}
\nc{\ellb}{\boldsymbol\ell}
\title{
\bf
Model Reference Adaptive Control
with 
Linear-like Closed-loop Behavior
}
\author{Mohamad T. Shahab and Daniel E. Miller
\thanks{M.T. Shahab 
        is with 
        the Computer, Electrical and Mathematical Science and Engineering (CEMSE) Division,
        King Abdullah University of Science and Technology 
        (KAUST), 
        Thuwal 23955, Saudi Arabia.
        Email: {\tt mohamad.shahab@kaust.edu.sa}.
        }
\thanks{
D.E. Miller is with the Department of Electrical and Computer Engineering, 
        University of Waterloo, Waterloo, ON N2L 3G1, Canada.
        Email: {\tt miller@uwaterloo.ca}.
        }
\thanks{
Support for this work was provided by the Natural
Sciences and Engineering Research Council of Canada
(NSERC).
}
        }
\begin{document}

\baselineskip=9.745pt
\maketitle
\thispagestyle{empty}
\pagestyle{empty}

\begin{abstract}
It is typically proven in adaptive control that asymptotic stabilization and tracking holds, and that at best a bounded-noise bounded-state property is proven.
Recently, it has been shown in both the pole-placement control and the $d$-step ahead control settings that 
if, as part of the adaptive controller, a
parameter estimator based on the original projection algorithm is used
and the parameter estimates 
are restricted 
to a convex
set, 
then
the closed-loop system experiences linear-like behavior:
exponential stability, a bounded gain on the noise in every $p$-norm, and a convolution bound on the exogenous inputs;
this can be leveraged to provide tolerance to unmodelled dynamics and plant parameter time-variation.
In this paper, we extend 
the approach to
the more general Model Reference Adaptive Control (MRAC)
problem and demonstrate that
we achieve the same desirable linear-like closed-loop properties.

\end{abstract}

\section{Introduction}

Adaptive control is an approach used to deal with systems 
with uncertain and/or time-varying 
parameters. In the classical approach to adaptive control, one combines a
linear time-invariant (LTI) compensator together with a tuning mechanism to adjust
the compensator parameters to match the plant.
The first general proofs came around 1980, e.g. see \cite{morse1978}, \cite{Goodwin1980}, \cite{Morse1980}, \cite{Narendra1980} and \cite{Narendra1980_pt2}.
However, the original controllers
are typically not robust to unmodelled dynamics, do not tolerate time-variations well,
have poor transient behavior and do not handle noise/disturbances well, e.g.
see \cite{rohrs}. 
During the following two decades, a good deal of research was carried out
to alleviate these shortcomings; a number of small controller design changes were proposed, such as the use of signal normalization, 
deadzones and $\sigma$-modification, e.g.
see \cite{Ioa86}, \cite{kreiss}, \cite{rick2}, \cite{rick}, 
and \cite{Tsakalis4};
also, simply using projection onto a convex set
of admissible parameters turned out to be powerful, 
e.g. see \cite{hanfu}, \cite{Naik}, \cite{Wen}, \cite{Wenhill} and \cite{ydstie}. 
However, in general these redesigned
controllers provide asymptotic stability 
and not exponential stability, with no bounded
gain on the noise\footnote{An exception is the work 
of Ydstie \cite{ydstie} where a bounded gain is proven.}; 
that being said, some of them, especially those using projection,
provide a bounded-noise bounded-state property, 
as well as tolerance of some degree
of unmodelled dynamics and/or time-variations.

Recently, 
for
discrete-time
LTI plants,
in both the $d$-step ahead control setting \cite{scl17}, \cite{acc19}, \cite{mcss20},
and the pole-placement control setting \cite{ccta17}, \cite{mcss18},
a new approach has been proposed  
which not only provides exponential
stability and a bounded gain on the noise, but also a
convolution bound on the exogenous inputs;
the resulting
convolution bound is leveraged to prove 
tolerance 
to a degree of time-variations and to a degree of unmodelled
dynamics \cite{ccta20}. 
As far as the authors are aware, such
{\bf linear-like convolution bounds have never before been
proven in the adaptive setting}. 
The key idea is to use the
original projection algorithm in conjunction with a restriction
of the parameter estimates to a convex set,
although
this convexity requirement was relaxed in
\cite{cdc18}, \cite{mcss18} and \cite{tac20}.
The goal of the present paper is
to extend these linear-like results
in the $d$-step ahead control setting to
the more general
{\em Model Reference Adaptive Control} (MRAC) problem.

Model reference adaptive control 
is an important approach to adaptive control 
where a pre-designed stable reference model is
used to model the desired closed-loop behavior.
Here we build on the results
proven for the $d $-step-ahead adaptive control problem
in \cite{acc19} and \cite{mcss20},
which is a special case of the more general MRAC problem considered here;
because we are seeking stronger closed-loop properties than what
is normally proven in the literature, more detailed analysis is needed in dealing with the MRAC setup,
since
the introduction of
the reference model into the analysis brings
extra complexity.
We prove that the
desirable linear-like closed-loop properties of
{\bf exponential stability, a bounded gain on the noise in every $p$-norm and
a convolution bound on the exogenous inputs},
are achieved using a model reference adaptive controller;
we also prove a stronger tracking result than what is usually found in the literature.

{\bf Notation.} We use standard notation throughout the paper. We denote $\R$, $\Z$, and $\C$ as the set of real numbers, integers and complex numbers, respectively.
  We will denote the Euclidean-norm of a vector and the induced norm of a matrix by 
the subscript-less default notation $\|\cdot\|$. 
Let
${\mathbb S}(\R^{p\times q} ) $ denote
the set of $\R^{p\times q} $-valued sequences. 
Also, $\ellb_{\infty} $ denotes the set of bounded sequences.
For a signal $f\in\ellb_\infty $, define the $\infty $-norm by
$\|f \|_\infty:= \sup_{t\in \Z} | f(t) | $.
For a closed and convex set $\Omega\subset\R^p$, let the function
 $\mathrm{Proj}_{\Omega}
 \left\{ \cdot \right\}:\R^{p}
 \mapsto \Omega $ 
denote the projection 
onto the set $\Omega  $;
because the set $\Omega $ is closed and convex, 
the function $\mathrm{Proj}_{\Omega} $ is well-defined.
If $\Omega\subset\R^p$ is a compact (closed and bounded) set, 
we define $\|\Omega\|:=\max_{x\in\Omega}\|x\|$. 
Let $I_{p} $ denote the
identity matrix of size $p$.
Define the normal vector $\e_j \in\R^{p}$ 
of appropriate length $p$ as
  $\e_j:=
    \bigl[
  \underbrace{\begin{matrix}0 & \cdots & 0 \end{matrix}}_{j-1\text{ elements} } \;\; \begin{matrix}1 & 0 & \cdots & 0 \end{matrix}
    \bigr]^\top.$
Last of all, for a signal
$f\in {\mathbb S}(\R ) $ which is 
sufficiently well-behaved to have a $z$-transform,
we let $F(z) $ denote this quantity.

 \section{The Setup}
\label{sec2}

In this paper we consider the following linear time-invariant (LTI)
discrete-time plant:
\begin{equation} \label{plant1}
\sum_{i=0}^{n} a_i y(t-i) = \sum_{i=0}^{m} b_i u(t-d-i) +w(t), \quad t\in\Z,
\end{equation}
with
 $y(t)\in\R $ as the measured output,
$u(t)\in\R $ as the control input, 
and $w(t)\in\R $ as the noise/disturbance.
The plant parameters are regularized such that $a_0=1 $,
and the system delay is exactly $d$, i.e. $b_0\neq0 $.
Associated with this plant are the polynomials
${\mathbf A}(z^{-1}):= \sum_{i=0}^{n} a_i z^{-i}
$ and  $
\quad {\mathbf B}(z^{-1}):= \sum_{i=0}^{m} b_i z^{-i}$,
the transfer function $z^{-d} \frac{{\mathbf B}(z^{-1})}{{\mathbf A}(z^{-1})} $,
and the plant parameter vector:
$$ \theta:= \begin{bmatrix}
a_1 & a_2 & \cdots & a_n & b_0 & b_1& \cdots & b_m
\end{bmatrix};
 $$
we assume that $\theta $ belongs to a known set $ {\cal S}_{ab} \subset\R^{n+m+1}$.
Observe that such a plant can be expressed in the $z$-transform domain as
\begin{equation} \label{plant2}
{\mathbf A}(z^{-1}) Y(z) = z^{-d} {\mathbf B}(z^{-1}) U(z) + W(z).
\end{equation}

The control objective is closed-loop stability and asymptotic tracking 
of a given reference signal $y^*(t)\in\R $ generated as the output of a stable reference model;
more specifically,
given pre-designed polynomials
${\mathbf H}(z^{-1}):= \sum_{i=0}^{n'-d} h_i z^{-i}$ and $ 
{\mathbf L}(z^{-1}):= 1+\sum_{i=1}^{n'} l_i z^{-i}$ 
(with $ n'\leq n $),
and given
a bounded exogenous signal $r(t)\in\R $, 
we utilize the following reference model expressed in the $z $-transform form:
\begin{flalign}
 {\mathbf L}(z^{-1})Y^*(z)=z^{-d}{\mathbf H}(z^{-1}) R(z).
 \label{plantRef1}
 \end{flalign}
We assume that the roots of ${\mathbf L}(z^{-1})$ belongs to the open unit desk,
i.e. the reference model is stable.
If we define the tracking error $\varepsilon$ by
\begin{flalign}
 \varepsilon(t):=y(t)-y^*(t),
\end{flalign}
then the goal 
is to drive $\varepsilon $ to zero asymptotically.
\begin{remark}
Notice that for the $d$-step-ahead control problem
the reference model is simply $Y^*(z)=z^{-d} R(z) $.
\end{remark}

We impose the following assumptions on the set of admissible parameters.
\begin{assm}
${\cal S}_{ab}$ is closed and bounded (compact),
and
for each $\theta\in {\cal S}_{ab}$,
the corresponding ${\mathbf B}(z^{-1})$ has roots in the open unit disk and
the sign of $b_{0}$ is always the same.
\end{assm}
\noindent
The boundedness requirement on ${\cal S}_{ab}$ is reasonable in practical situations;
it is used here to prove uniform bounds and decay rates on the closed-loop behavior.
The constraint on the roots of ${\mathbf B}(z^{-1})$ is a requirement that the plant 
be minimum phase;
this is necessary to ensure tracking of 
bounded reference signals \cite{ld_paper}.
Knowledge of the sign of the high-frequency gain $b_0 $ is common 
in adaptive control \cite{goodwinsin}.
\begin{remark}
It is implicit in the assumptions that we know
the system delay $d$
as well as the upper bounds on the orders 
of ${\mathbf A}(z^{-1})$ and ${\mathbf B}(z^{-1})$.
\end{remark}


To proceed, we use a parameter estimator 
together with an adaptive control law based on the Certainty Equivalence Principle.
It is convenient to put the plant into the so-called {\em predictor form}.
To this end,
by long division 
we can find ${\mathbf F}(z^{-1})=\sum_{i=0}^{d-1} f_i z^{-i} $ and ${\boldsymbol\alpha}(z^{-1})=\sum_{i=0}^{n-1} \alpha_i z^{-i}$ that satisfy the following:
\begin{equation*}
\frac{{\mathbf L}(z^{-1})}{{\mathbf A}(z^{-1})} = {\mathbf F}(z^{-1}) 
+ z^{-d} \frac{{\boldsymbol\alpha}(z^{-1})}{ {\mathbf A}(z^{-1})};
\end{equation*}
if we now define
$
{\boldsymbol\beta}(z^{-1})= \sum_{i=0}^{m+d-1} \beta_i z^{-i}:={\mathbf F}(z^{-1}){\mathbf B}(z^{-1}),
$
then it is easy to verify that the following is true:
\begin{equation} \label{alpha1}
z^{-d} \frac{{\mathbf B}(z^{-1})}{{\mathbf A}(z^{-1})} = \frac{ {\boldsymbol\beta}(z^{-1})}{z^{d}
{\mathbf L}(z^{-1})-{\boldsymbol\alpha}(z^{-1})} .
\end{equation}
So comparing \eqref{alpha1} with the plant equation in \eqref{plant2}, 
we are able to re-write the plant equation as
\begin{equation} \label{plantM1}
{\mathbf L}(z^{-1})[z^d Y(z)] = {\boldsymbol\alpha}(z^{-1})Y(z) + {\boldsymbol\beta}(z^{-1}) U(z) 
+ \overline W(z),
\end{equation}
with $ \overline W(z):= z^d {\mathbf F}(z^{-1}) W(z) $.
Now define
a weighted sum of the system output $\overline y $ by
\begin{flalign}
\overline y(t):= y(t) + \sum_{j=1}^{n'} l_j y(t-j);
\label{ybar}
\end{flalign}
clearly the $z$-transform 
of $\overline y(t) $ is
$ 
{\mathbf L}(z^{-1})Y(z) $,
so the time-domain counterpart of \eqref{plantM1}
in predictor form is
\begin{equation} \label{plantM2}
\overline y(t+d) = \phi(t)^\top \theta^* + \overline w(t),
\end{equation}
with
\[
	\phi(t):=
	\begin{bmatrix}
	y(t) \\ y(t-1) \\ \vdots \\ y(t-n+1) \\ u(t) \\ u(t-1) \\ \vdots \\ u(t-m-d+1)
	\end{bmatrix},
	\qquad
	\theta^*:=
	\begin{bmatrix}
	\alpha_0 \\ \alpha_1 \\ \vdots \\ \alpha_{n-1} \\
	 \beta_0 \\ \beta_1 \\ \vdots \\ \beta_{m+d-1}
	\end{bmatrix}.
\]

Let ${\cal S}_{\alpha\beta} \subset \R^{n+m+d} $ denote the set of admissible $\theta^* $
that arise from the original plant parameters which lie in ${\cal S}_{ab} $;
it is clear that the associated mapping from ${\cal S}_{ab} $ to ${\cal S}_{\alpha\beta}$
is analytic, so the compactness of ${\cal S}_{ab} $ means that ${\cal S}_{\alpha\beta}$ is compact as well.
Furthermore, it is easy to see that $\beta_0=b_0 $.
It is convenient that the set of admissible parameters in the new parameter space be convex and closed;
so at this point let ${\cal S} \subset \R^{n+m+d} $ be any compact and convex set 
containing ${\cal S}_{\alpha\beta}$ for which the $(n+1)$th element is never zero; the convex hull of ${\cal S}_{\alpha\beta}$ will do,
although it may be more convenient to use a 
hyperrectangle (for projection purposes).
We will show an example on obtaining such a set in the simulation section.

Now define
$\overline Y^*(z):= {\mathbf L}(z^{-1})Y^*(z); $
then the model reference control law is 
given by
\begin{equation*} 
\overline y^*(t+d) = \phi(t)^\top \theta^*.
\end{equation*}
In the absence of noise, and assuming the controller is applied for all
$t\in\Z $, 
we can show that
we have $y(t)=y^*(t) $ for all $t\in\Z $.
In our case of unknown parameters, we seek an adaptive version of the control law 
which is applied after some initial time, 
i.e. $t\geq t_0 $.

\subsection{Initialization}

In most adaptive control results, the goal is to prove asymptotic behavior, so the
details of the initial condition are unimportant. 
On the other hand, here we wish to obtain a bound
on the transient behavior so we must proceed carefully. 
Here we adopt the approach used in
the $d$-step ahead control setting of \cite{acc19} and \cite{mcss20}.
With the definition \eqref{ybar} in mind (and $n'\leq n $), 
observe that if we wish to solve \eqref{plantM2} 
for $y(\cdot)$ starting at time $t_0$, then it
is clear that we need an initial condition of
\begin{multline}
x_0:=
\bigl[
y(t_0) \;\; 
y(t_0-1) \;\;
\cdots \;\; y(t_0-n-d+2) 
\nonumber
\\
\qquad
u(t_0) \;\; 
u(t_0-1) \;\;
\cdots \;\; u(t_0-m-2d+2) 
\bigr]^\top
;
\end{multline}
observe that this is sufficient information to obtain 
$\phi(t_0),\phi(t_0-1),\ldots,\phi(t_0-d+1) $.

 \subsection{Parameter Estimation}
\unskip
We can re-write the plant equation \eqref{plantM2} as
\begin{equation} \label{plantM3}
\overline y(t+1) = \phi(t-d+1)^\top \theta^* + \overline w(t-d+1),
\;\; t\geq t_0.
\end{equation}
Given an estimate $\hat\theta(t) $ of $\theta^* $ at time $t$,
we define the prediction error by
\begin{equation} \label{predict1}
e(t+1):=\overline y(t+1) - \phi(t-d+1)^\top \hat\theta(t);
\end{equation}
this is a measure of the error in $\hat\theta(t) $.
A common way to obtain a new estimate is from the solution of the optimization problem
\[
\underset{\theta}{\mathrm{argmin}} \left\{
\|\theta-\hat\theta(t) \| : 
\overline y(t+1)=\phi(t-d+1)^\top \theta
\right\},
\]
yielding the (ideal) {\bf Projection Algorithm}:
\begin{equation}
\hat\theta(t+1)
 =
 \left\{
 \begin{matrix*}[l]
 \hat\theta(t)
 & \phi(t-d+1)=0 
 \\
 \hat\theta(t)
 +
 \frac{\phi(t-d+1)}{\|\phi(t-d+1)\|^2}
  e(t+1)
  & \text{otherwise};
  \end{matrix*}
  \right.
  \label{orig1}
\end{equation}
at this point, we can also constrain it to ${\cal S}$ by projection. Of
course, if $\|\phi(t-d+1)\|$ is close to zero, numerical problems
may occur, so it is the norm in the literature (e.g. \cite{goodwinsin} and \cite{Goodwin1980})
to add a constant to the denominator\footnote{
An exception is \cite{akhtar} where
  the ideal algorithm \eqref{orig1} is used and
  Lyapunov stability is proven, 
  but a convolution bound on the exogenous inputs is not proven, and the high-frequency gain is assumed to be known.
}; 
however as pointed out in our earlier work
\cite{ccta17}, \cite{mcss18} and \cite{mcss20}, 
this can lead to the loss of exponential
stability and a loss of a bounded gain on the noise.
As proposed in \cite{ccta17}, \cite{mcss18} and \cite{mcss20}, 
we turn off the estimation if it is clear that the noise
is swamping the estimation error. To this end,
with $\delta\in(0,\infty] $, we turn off the estimator if the 
update is
larger than $2\|{\cal S}\| + \delta$ in magnitude;
so
define
\begin{equation}
\rho(t)
:=\left\{\begin{matrix}
1 & & \text{if } | e(t+1)|< (2\Vert {\cal S}\Vert +\delta)\|\phi(t-d+1)\|\\
0 & & \text{otherwise};
\end{matrix}\right.
\nonumber
 \end{equation}
given the initial condition of $\hat\theta(t_0 )
=\theta_0 \in \R^{m+n+d}  $,
 for $t\geq t_0 $ we define\footnote{If $\delta=\infty $, then we adopt the understanding that $\infty \times 0 = 0$, in which case this formula in \eqref{est1a} collapses into
the original version \eqref{orig1}.}
\begin{subequations}
\label{est1}
\begin{flalign}
\label{est1a}
 &\check\theta(t+1)
 =
 \hat\theta(t)
 +
 \rho(t)
  {\frac{\phi(t-d+1)}{\|\phi(t-d+1)\|^2}}
  e(t+1)
 \\
\label{est1b}
 &\hat\theta(t+1)
 =
\mathrm{Proj}_{\cal S} \left\{ \check\theta(t+1) \right\}.
\end{flalign}
\end{subequations}

Analyzing the closed-loop system requires a careful examination of the estimation algorithm.
First define the parameter error 
by $\tilde\theta(t) := \hat\theta(t)-\theta^* $.
The following result lists properties which are 
equivalent to those
of Proposition 1 in \cite{mcss20} 
for the
$d $-step ahead adaptive control setup.
\begin{prop}
\label{est_prop}
For every 
 $t_0\in\Z$, $ x_0\in\R^{n+m+3d-2}$, $
\theta_0 \in{\cal S}$, $ \theta\in{\cal S}_{ab}
, w\in\ellb_\infty
 $,
and $\delta\in(0,\infty] $,
when the estimator \eqref{est1} is applied to the plant \eqref{plant1},
the following holds:
\begin{flalign}
&\|\hat\theta(t+1)-\hat\theta(t) \|
\leq
\rho(t)
\frac{|e(t+1) |}{\|\phi(t-d+1) \|},
\quad t\geq t_0,
\nonumber
\\
&\|\tilde\theta(t) \|^2
\leq
\|\tilde\theta(\tau) \|^2
+
\sum_{j=\tau}^{t-1}
\rho(j)
\biggl[
- \frac{1}{2}\frac{e(j+1) ^2 }{ \|\phi(j-d+1) \|^2 }
+
\nonumber
\\
&\qquad\qquad\qquad
\frac{2\overline w(j-d+1)^2 }{ \|\phi(j-d+1) \|^2 }
\biggr],
\qquad
t>\tau\geq t_0.
\nonumber
\end{flalign}
\end{prop}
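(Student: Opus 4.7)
My plan is to prove both bounds by analyzing the one-step update and then telescoping; both rely crucially on the fact that the admissible set ${\cal S}$ is closed and convex, contains $\theta^\ast$, and that projection onto such a set is non-expansive.

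For the first bound, I would handle the cases $\rho(t)=0$ and $\rho(t)=1$ separately. When $\rho(t)=0$ the update step \eqref{est1a} gives $\check\theta(t+1)=\hat\theta(t)$, so after projection $\hat\theta(t+1)=\hat\theta(t)$ (since $\hat\theta(t)\in{\cal S}$ already by \eqref{est1b} applied at the previous step), and both sides of the inequality are zero. When $\rho(t)=1$, a direct computation gives $\|\check\theta(t+1)-\hat\theta(t)\|=|e(t+1)|/\|\phi(t-d+1)\|$; the convex projection operator is non-expansive, so $\|\hat\theta(t+1)-\hat\theta(t)\|=\|\mathrm{Proj}_{\cal S}\check\theta(t+1)-\mathrm{Proj}_{\cal S}\hat\theta(t)\|\le\|\check\theta(t+1)-\hat\theta(t)\|$, which yields the stated bound.

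For the second bound, I would derive a one-step inequality for $V(t):=\|\tilde\theta(t)\|^2$ and then sum. The key identity comes from \eqref{plantM3} and \eqref{predict1}, namely $e(t+1)=-\phi(t-d+1)^\top\tilde\theta(t)+\overline w(t-d+1)$. Expanding $\|\check\theta(t+1)-\theta^\ast\|^2$ using \eqref{est1a}, and using $\rho(t)^2=\rho(t)$, I would obtain
\begin{equation*}
\|\check\theta(t+1)-\theta^\ast\|^2=\|\tilde\theta(t)\|^2+\rho(t)\,\frac{-e(t+1)^2+2\,e(t+1)\,\overline w(t-d+1)}{\|\phi(t-d+1)\|^2}.
\end{equation*}
Applying Young's inequality $2ab\le\tfrac12 a^2+2b^2$ to the cross term gives
\begin{equation*}
\|\check\theta(t+1)-\theta^\ast\|^2\le\|\tilde\theta(t)\|^2+\rho(t)\Bigl[-\tfrac12\tfrac{e(t+1)^2}{\|\phi(t-d+1)\|^2}+\tfrac{2\,\overline w(t-d+1)^2}{\|\phi(t-d+1)\|^2}\Bigr].
\end{equation*}
Since $\theta^\ast\in{\cal S}_{\alpha\beta}\subset{\cal S}$ and ${\cal S}$ is closed and convex, non-expansiveness of the projection yields $\|\hat\theta(t+1)-\theta^\ast\|\le\|\check\theta(t+1)-\theta^\ast\|$, so the same inequality holds for $V(t+1)$. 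Telescoping from $\tau$ to $t-1$ gives the claimed summation bound.

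The only subtlety worth being careful about is the definition of $\rho(t)$ when $\phi(t-d+1)=0$: in that case the strict inequality $|e(t+1)|<(2\|{\cal S}\|+\delta)\cdot 0$ is impossible, so $\rho(t)=0$ and the fractions in both statements are multiplied by zero, making them well-defined. This is the only place where the otherwise routine calculation requires attention; the rest is the classical projection-algorithm Lyapunov argument, adapted to the MRAC regressor $\phi(t-d+1)$ and delay-$d$ predictor form \eqref{plantM3}.
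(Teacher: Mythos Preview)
Your proof is correct and is precisely the classical projection-algorithm Lyapunov argument: non-expansiveness of $\mathrm{Proj}_{\cal S}$ for the first bound, the expansion of $\|\check\theta(t+1)-\theta^*\|^2$ using the identity $e(t+1)=-\phi(t-d+1)^\top\tilde\theta(t)+\overline w(t-d+1)$ followed by Young's inequality and projection for the second, and your handling of the degenerate case $\phi(t-d+1)=0$ via the definition of $\rho(t)$ is exactly right. The paper does not give an in-text proof of this proposition but simply remarks that it is equivalent to Proposition~1 of \cite{mcss20}; your argument is the standard one underlying that result, so there is nothing to contrast.
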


\subsection{The Control Law}

With the natural partitioning 
\begin{equation}
	\hat\theta(t)=:
	\begin{bmatrix}
	\hat\alpha_0(t) 
	& \cdots & \hat\alpha_{n-1}(t) &
	 \hat\beta_0(t) 
	 & \cdots & \hat\beta_{m+d-1}(t)
	\end{bmatrix}^\top,
	\nonumber
\end{equation}
the {\bf model reference adaptive control law}
(based on the Certainty Equivalence principle) 
is
\begin{equation} 
\overline y^*(t+d) = \phi(t)^\top \hat\theta(t);
\nonumber
\end{equation}
solving this for $u(t)$
and using the reference model \eqref{plantRef1},
we have
\begin{flalign}
u(t)
&=
\frac{1}{\hat\beta_0(t)}\biggl[
-\sum_{i=0}^{n-1} \hat\alpha_{i}(t) y(t-i)
-\sum_{i=1}^{m+d-1} \hat\beta_{i}(t) u(t-i) +
\nonumber
\\
&\qquad
 \sum_{i=0}^{n'-d} h_i r(t-i)
\biggr],
\quad t\geq t_0.
 \label{control1}
\end{flalign}

It is convenient for analysis to 
define an auxiliary
tracking error:
\begin{flalign}
\overline \varepsilon(t):=\overline y(t)-\overline y^*(t);
\label{errL}
\end{flalign}
it is easy to show that
\begin{flalign}
\overline\varepsilon(t)
&=
 -\phi(t-d)^\top \tilde\theta(t-d) + \overline w(t-d),
 \;\; t\geq t_0+d,
\label{error2}
\\
e(t)&= 
-\phi(t-d)^\top \tilde\theta(t-1) + \overline w(t-d),
\;\; t\geq t_0+1,
\label{pred_error1}
\end{flalign}
as well as
\begin{equation} \label{error1}
\overline\varepsilon(t)=e(t)+ \phi(t-d)^\top \left[\hat\theta(t-1)-\hat\theta(t-d)\right],
\quad t\geq t_0+d.
\end{equation}
Observe that we can compute
$\overline \varepsilon (t),\, t\in\{t_0,t_0+1,\ldots,t_0+d-1 \} $, from $x_0,w$ and $y^* $.

In the next section we develop several models used in the analysis, after which we state and prove
our result.
The approach borrows ideas from our previous work
on the $d$-step ahead setup \cite{mcss20}
and extends them to the
{\em Model Reference Adaptive Control} (MRAC) case.


\section{The Analysis}
\unskip
In the pole-placement adaptive control setup of our earlier work \cite{mcss18},
a key closed-loop model consists of an update equation for $\phi(t) $, with the state
matrix consisting of controller and plant estimates; 
this was effective because the characteristic polynomial of this matrix is time-invariant 
and has all roots in the open unit disk.
If we were to apply the same idea in our case here, then the characteristic polynomial 
would have roots which are time-varying, with some at zero and the rest at the roots
of the corresponding naturally defined polynomial $\hat{\boldsymbol \beta}(t,z^{-1} ) $,
which is time varying, and it may not have roots in the open unit disk.
On the other hand, in the $d$-step ahead adaptive control setup of our earlier work \cite{acc19} and \cite{mcss20},
these difficulties
were dealt with by constructing
three different models for use in the analysis:
a model that does not use parameter estimates but is driven by the tracking error,
a crude model to bound the size of growth of $\phi(t) $,
and a crucial model which is driven by perturbed versions of the present and past values of $\phi(\cdot) $.
Here in this paper, which deals with the more general
MRAC problem,
 we construct similar, though not identical, 
 models,
but they need more careful analysis than ones in the $d $-step ahead control case.

\subsection{A Good Model}
\unskip
Here we first obtain an equation which avoids using
parameter estimates, though it is driven by the weighted sum of the tracking error $\overline\varepsilon(\cdot) $.
By extending 
the idea from \cite{acc19} and \cite{mcss20}, 
using the definition of $\varepsilon $ we obtain
a formula for $y(t+1) $,
and 
using the plant equation \eqref{plant1} we obtain
a formula for $u(t+1) $;
then, 
it is easy to see that there exists a matrix $A_g\in\R^{(n+m+d)\times(n+m+d)} $
(which depends implicitly on $\theta \in {\cal S}_{ab} $) so that the following holds:
\begin{flalign} 
&\phi(t+1)
=
A_g \phi(t) +
  \e_1
 \varepsilon(t+1) +
\nonumber \\ 
&\quad 
\frac{1}{b_0}
\e_{n+1}
\sum_{i=0}^{d} a_{d-i} \varepsilon(t+1+i) 
+ 
\e_1 y^*(t+1)
+
\nonumber \\ 
&\quad
\frac{1}{b_0}
\e_{n+1}
\biggl[
\sum_{i=0}^{d} a_{d-i} y^*(t+1+i) - w(t+d+1)
\biggr].
\label{plant_good1}
\end{flalign}
The characteristic polynomial of $A_g $ 
is $\tfrac{1}{b_0}z^{n+m+d}{\mathbf B}(z^{-1}) $, 
so all of its roots are in the open unit disk.

The model in \eqref{plant_good1} is similar to the {\em good} model obtained in 
the analysis in the $d $-step ahead control case in \cite{acc19} and \cite{mcss20}
where it is driven by the tracking error $\varepsilon $.
However in the case considered here we would like to obtain 
a model which is, instead, driven by $\overline\varepsilon $;
this will turn out to be crucial in analyzing the closed-loop behavior.
To this end,
from \eqref{errL} and the definitions of $\overline y $ and $\overline y^* $,
it is easy to see that 
\begin{equation}
{\cal E}(z)=\frac{1}{{\mathbf L}(z^{-1})} \overline{\cal E}(z)
;
\label{tfL}
\end{equation}
so
we can represent $\varepsilon(t) $ as the output of an 
$n' $th-order system
driven by
$\overline \varepsilon $ as follows: 
with 
$
\zeta(t)
:=
\begin{bmatrix}
\varepsilon(t) & \varepsilon(t-1) & \cdots & \varepsilon(t-n'+1)
\end{bmatrix}^\top,
$
and
$A_l\in\R^{n'\times n' } $ defined by
\[
A_l
:=\left[
\begin{smallmatrix}
-l_1 & -l_2 & \cdots & -l_{n'-1} & -l_{n'}
\\
1 & 0 & \cdots & 0 & 0
\\
0 & 1 & \cdots & 0 & 0
\\
\vdots & & \ddots & \ddots & \vdots
\\
0 & 0 & \cdots & 1 & 0
\end{smallmatrix}
\right],
\]
we have
\begin{subequations}
\label{xi_sys}
\begin{flalign}
\zeta(t+1) 
&= 
A_l \zeta(t) + 
\e_1 \overline\varepsilon(t+1)
\\
\varepsilon(t)
&=
\e_1^\top \zeta(t).
\label{vare_eqn}
\end{flalign}
\end{subequations}
Note that \eqref{plant_good1} is driven on the RHS by $d+1 $ terms of $\varepsilon(\cdot) $;
but from \eqref{vare_eqn} we have
\begin{equation}
\varepsilon(t+1+j)=\e_1^\top \zeta(t+1+j ),\quad j=0,1,\ldots,d. 
\label{barerr2}
\end{equation}
With this in mind,
we construct the following $(n'(d+1)) $th-order system
driven by $\overline \varepsilon(\cdot) $:
\begin{equation}
\left[
\begin{smallmatrix}
\zeta(t+d+2)
\\
\zeta(t+d+1)
\\
\vdots
\\
\vdots
\\
\zeta(t+2)
\end{smallmatrix}
\right]
= 
\underbrace{
\left[
\begin{smallmatrix}
A_l &   &   &   &  
\\
I_{n'} &  &  &  & 
\\
 & I_{n'} &   &  & 
\\
 & & \ddots &  & 
\\
 &  &  & I_{n'} & 0
\end{smallmatrix}
\right]
}_
{=:\widetilde A_l }
\underbrace{
\left[
\begin{smallmatrix}
\zeta(t+d+1)
\\
\zeta(t+d)
\\
\vdots
\\
\vdots
\\
\zeta(t+1)
\end{smallmatrix}
\right]
}_{=:\overline\zeta(t)}
+
\e_1
\overline \varepsilon(t+d+2)
.
\label{xi_sys_no2}
\end{equation}
At this point we can combine the models
\eqref{plant_good1} and \eqref{xi_sys_no2} 
together with the linking equation 
\eqref{barerr2}
to obtain a model driven by the exogenous inputs
and $\overline \varepsilon $
(rather than $\varepsilon $):
with 
\begin{equation}
\eta(t)
:=
\frac{1}{b_0}
\e_{n+1}
\biggl[
\sum_{i=0}^{d} a_{d-i} y^*(t+1+i) - w(t+d+1)
\biggr]
+
\e_1 y^*(t+1),
\label{eta_2}
\end{equation}
if follows that there exists
a matrix
 $\tilde B\in\R^{(n+m+d)\times(n'(d+1))}  $, which depends continuously on $\theta \in {\cal S}_{ab}  $,
to obtain
the following $(n+m+d+n'(d+1)) $th-order system:
\begin{equation}
\begin{bmatrix}
\phi(t+1)
\\
\overline\zeta(t+1)
\end{bmatrix}
=
\underbrace{
\begin{bmatrix}
A_g & \tilde B
\\
 & \tilde A_l
\end{bmatrix}
}_{=: \widetilde A_g }
\underbrace{
\begin{bmatrix}
\phi(t)
\\
\overline\zeta(t)
\end{bmatrix}
}_{=: \overline\phi(t) }
+ \eta(t)
+
\e_{n+m+d+1}
\overline\varepsilon(t+d+2).
\label{goodmodel3}
\end{equation}

Before presenting an even 
better model suitable for analysis 
we need to analyze a couple of crude models of the closed-loop behavior.

\subsection{Crude Models}
\unskip
At times, we will need to use crude models to bound the size of the growth of
$\phi(t) $ 
and
the size of the growth of
$\overline\phi(t) $ 
in terms of the exogenous inputs. Following a similar analysis
of the crude model in \cite{acc19} and \cite{mcss20},
we now use \eqref{plant1} to describe $y(t+1)$,
and use the control law \eqref{control1}
together
with the obtained equation for $y(t+1) $
to describe $u(t+1)$;
so, we can appropriately define matrices $A_{1}(t)$, $B_{1}(t)$ and $B_{2}(t)$ in terms of $\theta\in{\cal S}_{ab} $ 
and $\hat\theta(t+1)\in{\cal S} $
so that 
we have the following {\bf crude model of the behavior of $\phi(\cdot) $}:
\begin{flalign}
&\phi(t+1)=A_1(t)\phi(t)+
\nonumber
\\
&
\quad
B_1(t) \overline y^*(t+d+1) +B_2(t) w(t+1),
\quad
t\geq t_0.
\label{crude1}
\end{flalign}
Furthermore,
we combine \eqref{crude1} and \eqref{xi_sys_no2} 
to obtain an equation for $\overline\phi(t) $:
we can appropriately define matrices $B_3(t),B_4(t) $ so that
\begin{flalign}
&\overline\phi(t+1)
=
\begin{bmatrix}
A_1(t) & 
\\
 & \widetilde A_l
\end{bmatrix}
\overline \phi(t)
+B_3(t) \overline y^*(t+d+1) +
\nonumber
\\
&\quad
B_4(t) w(t+1)
+
\e_{n+m+d+1} \overline\varepsilon(t+d+2),
\quad 
t\geq t_0.
\label{crude_no2}
\end{flalign}
Now we want to find a representation for $\overline\varepsilon(t+d+2)$ on the RHS above
in terms of $\phi(t) $:
from \eqref{error2} 
we have $\overline \varepsilon(t+d+2)=-\tilde\theta(t+2)^\top\phi(t+2)+\overline w(t+2)   $,
so we use \eqref{crude1} 
to find a representation of $\phi(t+2) $ in terms of $\phi(t) $
and
substitute 
into \eqref{crude_no2}; then we can appropriately 
define matrices $A_2(t), B_5(t), B_6(t), B_7(t), B_8(t) $ so that the following {\bf crude model of the behavior of $\overline\phi(\cdot) $} is obtained:
\begin{flalign}
&\overline\phi(t+1)
=
A_2(t)
\overline \phi(t)
+B_5(t) \overline y^*(t+d+1) +
\nonumber
\\
&\qquad
B_6(t) w(t+1)
+
B_7(t) \overline y^*(t+d+2) +
\nonumber
\\
&\qquad
B_8(t) w(t+2)
+
\e_{n+m+d+1} \overline w(t+2),
\quad
t\geq t_0.
\label{crude_no3}
\end{flalign}
Due to the compactness of ${\cal S}_{ab} $, ${\cal S}_{\alpha\beta} $ and ${\cal S} $,
we can obtain the following immediately.
\begin{prop}
\label{prop_crude}
There exists a constant $c_1\geq 1 $ such that
for every $t_0\in\Z $, $x_0\in\R^{n+m+3d-2} $, $\theta_0 \in{\cal S} $,
$\theta\in{\cal S}_{ab} $, $r,w\in\ellb_\infty $, and $\delta\in(0,\infty] $,
when the adaptive controller \eqref{est1} and \eqref{control1} is applied to the plant \eqref{plant1},
the following holds:
\begin{flalign}
&\|A_1(t) \|\leq c_1, \;\;  \|A_2(t) \|\leq c_1, \;\; 
\|B_1(t) \|\leq c_1, 
\nonumber
\\
&   \|B_2(t) \|\leq c_1, \;\;
\|B_5(t) \|\leq c_1, \;\; 
\|B_6(t) \|\leq c_1, 
\nonumber
\\
& 
\|B_7(t) \|\leq c_1, \;\; 
\|B_8(t) \|\leq c_1,
\;\;
t\geq t_0.
\nonumber
\end{flalign}
\end{prop}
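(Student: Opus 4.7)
The plan is to observe that this proposition is essentially a compactness/continuity statement: each of the matrices $A_1(t), A_2(t), B_1(t), B_2(t), B_5(t), B_6(t), B_7(t), B_8(t)$ is an explicit function of the plant parameter $\theta \in \mathcal{S}_{ab}$ together with the current estimator value $\hat\theta(t+1)$ (and, for the second group, also $\hat\theta(t+2)$), all of which take values in compact sets. No dynamic argument is needed; the only substantive task is to check that each matrix depends \emph{continuously} on its arguments.

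First I would write out the explicit entries of $A_1(t), B_1(t), B_2(t)$ from the derivation of \eqref{crude1}. The rows describing $y(t+1)$ are read directly from the plant equation \eqref{plant1} and have entries drawn from $\theta \in \mathcal{S}_{ab}$, which are trivially bounded by compactness. The rows describing $u(t+1)$ come from the control law \eqref{control1} and so involve the factor $1/\hat\beta_0(t+1)$. The critical point is that the construction of $\mathcal{S}$ guarantees its $(n+1)$th coordinate is never zero, and together with Assumption~1 this coordinate has a fixed sign; since $\mathcal{S}$ is compact, continuity of the projection gives a uniform lower bound $|\hat\beta_0(t+1)| \geq \underline{\beta} > 0$. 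Hence $1/\hat\beta_0(t+1)$ is uniformly bounded above, and every entry of $A_1(t), B_1(t), B_2(t)$ is a continuous function of $(\theta, \hat\theta(t+1))$ on the compact set $\mathcal{S}_{ab} \times \mathcal{S}$, so each has norm bounded by a constant independent of $t$ and of all other choices.

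For the second group, recall that $A_2, B_5, B_6, B_7, B_8$ were obtained by substituting the representation $\overline\varepsilon(t+d+2) = -\tilde\theta(t+2)^\top \phi(t+2) + \overline w(t+2)$ into \eqref{crude_no2} and then using \eqref{crude1} to rewrite $\phi(t+2)$ in terms of $\phi(t)$. Thus these matrices are polynomial combinations of $A_1(t), A_1(t+1), B_1(t+1), B_2(t+1)$, the block $\widetilde A_l$ (which is a constant matrix independent of $t$ and $\theta$), and the entries of $\tilde\theta(t+2) = \hat\theta(t+2) - \theta^*$. Since $\hat\theta(t+2) \in \mathcal{S}$ and $\theta^* \in \mathcal{S}_{\alpha\beta}$ with both sets compact, $\|\tilde\theta(t+2)\|$ is uniformly bounded, and by the first step so are the $A_1, B_1, B_2$ factors. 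Composing these bounds yields a uniform bound on the second group of matrices.

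Finally I take $c_1$ to be the maximum of all eight bounds obtained, increased if necessary to ensure $c_1 \geq 1$. There is no real obstacle here: the proof is a bookkeeping exercise in the explicit construction of the matrices, and the only place where any care is needed is the uniform lower bound on $|\hat\beta_0|$, which is guaranteed by the choice of $\mathcal{S}$ as a compact convex set in which the $(n+1)$th coordinate never vanishes.
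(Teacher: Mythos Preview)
Your proposal is correct and follows exactly the paper's approach: the paper states the proposition as an immediate consequence of the compactness of $\mathcal{S}_{ab}$, $\mathcal{S}_{\alpha\beta}$ and $\mathcal{S}$, and you have simply unpacked that remark in detail, including the one nontrivial point (the uniform lower bound on $|\hat\beta_0|$ coming from the choice of $\mathcal{S}$). One minor omission: in expressing $\phi(t+2)$ via two steps of \eqref{crude1} you should also list $B_1(t)$ and $B_2(t)$ among the building blocks, but these are bounded by the same argument, so the conclusion is unaffected.
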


\subsection{A Better Model}
\unskip
The {\em good} closed-loop model \eqref{goodmodel3} is driven by 
a future value of $\overline \varepsilon(\cdot) $.
We now combine it with the crude model \eqref{crude_no3} to obtain
a new model which is driven by perturbed version of $\overline\phi(t) $,
with weights associated with the parameter estimation updates. 
Before proceeding,
motivated by the form of the term in the
parameter estimator, 
we define
\begin{flalign}
\nu(t):=
\rho(t)
\frac{\phi(t-d+1)}{\Vert\phi(t-d+1)\Vert^2} e(t+1).
\nonumber
\end{flalign}

\begin{prop}
\label{good_prop}
There exists a constant $c_2 $ so that for every $t_0\in\Z$, $ x_0\in\R^{n+m+3d-2}$, $
\theta_0\in{\cal S}$, $ \theta\in{\cal S}_{ab}$, $ r,w\in\ellb_\infty $,
and $\delta\in(0,\infty] $,
when the adaptive controller \eqref{est1} and \eqref{control1}
is applied to the plant \eqref{plant1},
the following holds:
\begin{flalign} 
\overline\phi(t+1)
&=
[\widetilde A_g + \Delta(t) ]
\overline\phi (t)
+
\bar\eta(t),
\quad t\geq t_0 ,
\label{good_key1}
\end{flalign}
with
\begin{flalign}
\|\Delta(t) \|
&\leq
c_2
\sum_{j=2}^{d+1}
\|\nu(t+j) \|,
\label{xx}
\end{flalign}
and 
\begin{flalign}
&\|\bar\eta(t) \|
\leq
c_2
\biggl(
1+
\sum_{j=2}^{d+1}
\|\nu(t+j) \|
\biggr)
\biggl[
\sum_{j=1}^{\max\{3,d+1\}} 
 (|y^*(t+j)|
 +
\nonumber
\\
&\quad
|\overline y^*(t+d+j)|
+
|w(t+j)|+|\overline w(t+j)|)
\biggr].
\label{xx2}
\end{flalign}
\unskip
\end{prop}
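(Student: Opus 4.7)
The plan is to start from the good model \eqref{goodmodel3} and rewrite its driving term $\overline\varepsilon(t+d+2)$ so that it decomposes into a piece proportional to $\overline\phi(t)$, with coefficient whose norm is controlled by the parameter-update magnitudes $\nu(\cdot)$, plus an exogenous remainder. Since \eqref{goodmodel3} already reads $\overline\phi(t+1)=\widetilde A_g\overline\phi(t)+\eta(t)+\e_{n+m+d+1}\overline\varepsilon(t+d+2)$, such a decomposition immediately produces $\Delta(t)$ and $\bar\eta(t)$ of the required shape, so it only remains to check that the bounds collapse to \eqref{xx} and \eqref{xx2}.

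First I would expand $\overline\varepsilon(t+d+2)$ using \eqref{error1}:
\begin{equation*}
\overline\varepsilon(t+d+2)=e(t+d+2)+\phi(t+2)^\top[\hat\theta(t+d+1)-\hat\theta(t+2)].
\end{equation*}
By telescoping and the one-step bound in Proposition \ref{est_prop}, $\|\hat\theta(t+d+1)-\hat\theta(t+2)\|\le\sum_{j=2}^{d}\|\nu(t+j)\|$. For $e(t+d+2)$ I would case split on $\rho(t+d+1)$: if $\rho(t+d+1)=1$, then $\nu(t+d+1)^\top\phi(t+2)=e(t+d+2)$ directly from the definition of $\nu$; if $\rho(t+d+1)=0$, the turn-off condition combined with \eqref{pred_error1} and $\|\tilde\theta(\cdot)\|\le 2\|{\cal S}\|$ forces $\|\phi(t+2)\|\le|\overline w(t+2)|/\delta$ and $|e(t+d+2)|\le(2\|{\cal S}\|/\delta+1)|\overline w(t+2)|$, while $\nu(t+d+1)=0$ automatically. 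In both cases, $e(t+d+2)$ can be written as a row vector of norm at most $\|\nu(t+d+1)\|$ acting on $\phi(t+2)$, plus an exogenous remainder bounded by a constant times $|\overline w(t+2)|$.

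Second I would eliminate $\phi(t+2)$ in favour of $\overline\phi(t)$ by iterating the crude model \eqref{crude1} twice; Proposition \ref{prop_crude} yields a uniform bound on the resulting coefficients, so $\phi(t+2)$ equals a uniformly bounded matrix applied to $\phi(t)$ (a subvector of $\overline\phi(t)$) plus a uniformly bounded linear combination of $\overline y^*(t+d+1)$, $\overline y^*(t+d+2)$, $w(t+1)$, and $w(t+2)$. Substituting this into the expression for $\overline\varepsilon(t+d+2)$ and then into \eqref{goodmodel3} produces, inside the $\overline\phi(t)$-coefficient, terms that are products of $\|\nu(t+j)\|$-bounded factors and uniformly bounded matrices, giving $\|\Delta(t)\|\le c_2\sum_{j=2}^{d+1}\|\nu(t+j)\|$, i.e.~\eqref{xx}. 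The remaining terms, namely the original $\eta(t)$ from \eqref{eta_2}, the $\overline w(t+2)$ remainder, and the exogenous combinations produced by the $\phi(t+2)\to\phi(t)$ substitution (some multiplied by a $\|\nu(t+j)\|$ factor), assemble into $\bar\eta(t)$; combining them with the uniform bounds of Proposition \ref{prop_crude} yields \eqref{xx2} after choosing $c_2$ to dominate all such constants and observing that the exogenous time indices appearing in the construction are exactly those enumerated in \eqref{xx2}.

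The main obstacle is making the $\rho(t+d+1)=0$ branch compatible with the uniform form of \eqref{xx} and \eqref{xx2}, and simultaneously covering the case $\delta=\infty$ in which $\rho\equiv 1$ and the case split trivializes. The key observation is that in the deadzone regime $\nu(t+d+1)$ vanishes, so no $\|\nu\|$-weighted term is lost, and $|e(t+d+2)|$ is then bounded by a multiple of $|\overline w(t+2)|$, which is already present in \eqref{xx2}; once that identification is made, the rest of the argument is routine bookkeeping of time indices together with uniform norm estimates supplied by Proposition \ref{prop_crude}.
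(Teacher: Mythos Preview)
Your proposal is correct and follows the same overall architecture as the paper: start from the good model \eqref{goodmodel3}, factor the driving term $\overline\varepsilon(t+d+2)$ as a linear functional of $\phi(t+2)$ with coefficient bounded by $\sum_{j=2}^{d+1}\|\nu(t+j)\|$ plus an exogenous remainder, and then push $\phi(t+2)$ back to $\overline\phi(t)$ via two steps of a crude model.

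The one genuine difference is in how that factorization is carried out. The paper does not expand $\overline\varepsilon$ via \eqref{error1} first; instead it uses the algebraic identity $\overline\varepsilon=\rho\,\tfrac{\overline\varepsilon}{\|\phi\|^{2}}\,\phi^{\top}\phi+(1-\rho)\overline\varepsilon$ to define the rank-one perturbation $\tilde\Delta(t)=\rho(t-1)\tfrac{\overline\varepsilon(t)}{\|\phi(t-d)\|^{2}}\e_{n+m+d+1}\overline\phi(t-d)^{\top}P$ in one stroke, with the $(1-\rho)\overline\varepsilon$ part absorbed into a scalar $\eta_{0}$ bounded by $(1+4\|{\cal S}\|/\delta)|\overline w|$; only afterwards is \eqref{error1} together with Proposition~\ref{est_prop} invoked to bound $\rho\tfrac{|\overline\varepsilon|}{\|\phi\|}$ by $\sum\|\nu\|$. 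Your route inverts this order, applying \eqref{error1} first and then splitting on $\rho(t+d+1)$ to handle $e(t+d+2)$. Both yield a valid $\Delta(t)$ satisfying \eqref{xx}; the paper's version avoids the explicit case analysis, while yours makes the provenance of each $\|\nu(t+j)\|$ more transparent. A minor bookkeeping point: the paper uses the $\overline\phi$-crude model \eqref{crude_no3} rather than \eqref{crude1} for the two-step substitution, which is why $\overline w(t+3)$ and $\overline y^{*}(t+d+3)$ appear in its $\bar\eta$ and why the sum in \eqref{xx2} runs to $\max\{3,d+1\}$; your construction produces a strict subset of those indices, so \eqref{xx2} still holds.
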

\begin{proof}
See the Appendix.
\unskip
\end{proof}
The result in Proposition \ref{good_prop}
looks very similar, though not identical, 
to the analysis leading up to the main result 
of \cite{acc19} and \cite{mcss20}
on the $d$-step ahead adaptive control problem.
Notice that the matrix $\widetilde A_g $ 
is a function of $\theta\in{\cal S}_{ab} $ and 
the coefficients of ${\mathbf L}(z^{-1}) $;
it lies in a corresponding compact set ${\cal A} \subset \R^{(n+m+d+n'(d+1))\times(n+m+d+n'(d+1)) } $.
Furthermore, the eigenvalues of $\widetilde A_g $ are
at the origin, the roots of ${\mathbf L}(z^{-1}) $, and the roots of ${\mathbf B}(z^{-1}) $,
so they are all in the open unit disk;
so
we can use classical arguments to prove that 
for the desired reference model
there exists constants $\gamma $ and $\sigma\in(0 ,1) $ 
so that for all $\theta\in{\cal S}_{ab} $, we have
\begin{equation}
\bigl\|  \widetilde A_g^k  \bigr\| \leq \gamma \sigma^k,\quad k\geq0.
\label{ATV1}
\end{equation}
Indeed, we can choose any $\sigma $ larger than
\[
\underline\lambda:=
\max_{\theta\in{\cal S}_{ab} }
\bigl\{|\lambda|: \lambda\in\C, {\mathbf B}(\lambda^{-1})=0 \text{ and } {\mathbf L}(\lambda^{-1})=0  \bigr\}.
\]

Equations of the form given in \eqref{good_key1} appear
in classical adaptive control approaches.
While we can view \eqref{good_key1} as a linear time-varying system, 
we have to keep in mind that $\Delta(t) $ and $\bar\eta(t) $ are implicit nonlinear functions
of $\theta $, $\theta_0 $, $x_0$, $r $ and $w $.
However, this linear time-varying interpretation is very convenient for analysis;
to this end, let $\boldsymbol\Phi_A $ denote the state transition matrix of a general time-varying
square matrix $A$.
The following result is useful in analyzing our closed-loop system.

\begin{prop}[\hspace{-.2pt}\textbf{\cite{kreiss2}}]
\label{prop_4}
With $\sigma\in(\underline\lambda,1 ) $,
suppose that $\gamma\geq 1 $ is such that \eqref{ATV1} is satisfied for
every $\widetilde A_g \in {\cal A} $.
For every $\mu\in(\sigma,1) $, $g_0\geq0 $, $g_1\geq0 $, and
$g_2 \in \bigl[0, \frac{\mu-\sigma}{\gamma} \bigr),$
there exists a constant $\bar\gamma\geq1 $ so that for every 
$\widetilde A_g\in{\cal A} $ and
$\Delta \in {\mathbb S}\bigl(\R^{(n+m+d+n'(d+1))\times(n+m+d+n'(d+1)) } \bigr) $ satisfying
\[
\sum_{j=\tau}^{t-1}
\|\Delta(j) \|
\leq
g_0 + g_1 (t-\tau)^{\frac{1}{2}}+g_2(t-\tau), \;\; 
\bar t \geq t>\tau \geq \underline t,
\]
we have
$\|{\boldsymbol\Phi}_{\widetilde A_g + \Delta }(t,\tau) \|
\leq
\bar\gamma \mu^{t-\tau},
\quad \bar t \geq t>\tau \geq \underline t.$
\end{prop}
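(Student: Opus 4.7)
The plan is to bound the state transition matrix via discrete variation of parameters, apply a discrete Gronwall inequality, and then squeeze the resulting product using AM-GM to extract the sharp rate $\sigma + \gamma g_2$. First, I would write
\[
 {\boldsymbol\Phi}_{\widetilde A_g + \Delta}(t,\tau)
 = \widetilde A_g^{\,t-\tau}
 + \sum_{k=\tau}^{t-1} \widetilde A_g^{\,t-1-k}\,\Delta(k)\,{\boldsymbol\Phi}_{\widetilde A_g + \Delta}(k,\tau),
\]
pass to norms, use $\|\widetilde A_g^{j}\|\leq \gamma\sigma^j$, and introduce the rescaled quantity $u(t):=\sigma^{-(t-\tau)}\|{\boldsymbol\Phi}_{\widetilde A_g + \Delta}(t,\tau)\|$, which reduces the recursion to $u(t)\leq \gamma + (\gamma/\sigma)\sum_{k=\tau}^{t-1}\|\Delta(k)\|\,u(k)$. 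Discrete Gronwall then delivers
\[
 u(t) \;\leq\; \gamma \prod_{k=\tau}^{t-1}\Bigl(1 + \tfrac{\gamma}{\sigma}\|\Delta(k)\|\Bigr).
\]

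The main obstacle is that the standard exponential estimate $\prod(1+a_k)\leq e^{\sum a_k}$ only yields the weaker cutoff $\sigma\,e^{\gamma g_2/\sigma}<\mu$, whereas the hypothesis is $\sigma+\gamma g_2<\mu$. To match the stated threshold I would instead apply AM-GM across the $n:=t-\tau$ factors,
\[
 \prod_{k=\tau}^{t-1}\Bigl(1 + \tfrac{\gamma}{\sigma}\|\Delta(k)\|\Bigr)
 \;\leq\;
 \Bigl(1 + \tfrac{\gamma}{\sigma n}\sum_{k=\tau}^{t-1}\|\Delta(k)\|\Bigr)^{\!n},
\]
then substitute the hypothesis, peel the $g_2$ contribution off from the sub-linear $g_0,g_1$ contributions using $(1+a+b)^{n}\leq(1+a)^{n}e^{nb}$, and multiply through by $\sigma^n$ to land at the intermediate bound
\[
 \|{\boldsymbol\Phi}_{\widetilde A_g + \Delta}(t,\tau)\|
 \;\leq\;
 \gamma\,e^{\gamma g_0/\sigma}\,(\sigma+\gamma g_2)^{n}\,e^{(\gamma g_1/\sigma)\sqrt{n}}.
\]

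To finish, I would exploit the hypothesis $g_2<(\mu-\sigma)/\gamma$, which makes $\beta:=(\sigma+\gamma g_2)/\mu$ lie in $(0,1)$, so that the scalar function $n\mapsto \beta^{n}e^{(\gamma g_1/\sigma)\sqrt{n}}$ is uniformly bounded on $[0,\infty)$: elementary calculus locates its maximum at $\sqrt{n}=\gamma g_1/(2\sigma|\ln\beta|)$, with value $\exp\!\bigl((\gamma g_1)^2/(4\sigma^2|\ln\beta|)\bigr)$. Absorbing this into the prefactor gives $\|{\boldsymbol\Phi}_{\widetilde A_g + \Delta}(t,\tau)\|\leq \bar\gamma\,\mu^{t-\tau}$ with the explicit constant
\[
 \bar\gamma \;=\; \gamma\,\exp\!\Bigl(\tfrac{\gamma g_0}{\sigma} + \tfrac{(\gamma g_1)^2}{4\sigma^2|\ln((\sigma+\gamma g_2)/\mu)|}\Bigr).
\]
Since $\gamma,\sigma$ were chosen uniformly over $\widetilde A_g\in{\cal A}$, this constant depends only on $\gamma,\sigma,\mu,g_0,g_1,g_2$, as the statement requires; the restriction $\bar t\geq t>\tau\geq \underline t$ plays no essential role beyond identifying the range on which the hypothesis on $\sum\|\Delta(j)\|$ is assumed.
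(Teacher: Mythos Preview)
The paper does not supply its own proof of this proposition: it is quoted verbatim from \cite{kreiss2}, so there is no in-paper argument to compare against. Your self-contained proof is correct. The variation-of-constants identity, the rescaling $u(t)=\sigma^{-(t-\tau)}\|{\boldsymbol\Phi}_{\widetilde A_g+\Delta}(t,\tau)\|$, and the discrete Gronwall step are standard; the point of interest is your use of AM-GM on the Gronwall product, which is exactly what is needed to hit the sharp threshold $g_2<(\mu-\sigma)/\gamma$ stated in the proposition, whereas the cruder estimate $\prod(1+a_k)\leq e^{\sum a_k}$ would only recover the weaker condition $\sigma e^{\gamma g_2/\sigma}<\mu$. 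The subsequent splitting $(1+a+b)^n\leq(1+a)^n e^{nb}$ and the elementary maximization of $\beta^n e^{c\sqrt{n}}$ are routine and yield the explicit constant you wrote down; that constant indeed depends only on $(\gamma,\sigma,\mu,g_0,g_1,g_2)$, hence is uniform over $\widetilde A_g\in{\cal A}$ and over all admissible $\Delta$.
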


Next,
we present the main result proving that the closed-loop system
enjoys very desirable linear-like behavior.

\section{The Main Result}

\begin{theorem}
\label{thm1}
For every $\delta\in(0,\infty] $ and $\lambda\in(\underline\lambda,1 ) $, 
there exists a constant $c >0 $ so that
for every $t_0\in \Z $, $\theta \in {\cal S}_{ab}  $, $r,w \in \ellb_\infty $,
$ \theta_0 \in {\cal S} $, and plant initial condition
$x_0 $,
when the adaptive controller \eqref{est1} and 
\eqref{control1} 
is applied to the plant \eqref{plant1},
the following bound holds:
\begin{equation}
\|\phi(t)\| 
\leq 
c \lambda^{t-t_0} \|x_0 \|
+
\sum_{j=t_0}^t c \lambda^{t-j }
(|r(j) | + |w(j)| ), \quad t\geq t_0.
\label{th_bd1}
\end{equation}
Furthermore, if $w=0 $, then
\begin{flalign}
\sum_{k=t_0+d}^\infty 
\varepsilon(k)^2
&\leq
c (\|x_0 \|^2 + \|r \|_\infty^2 ).
\label{errBD2}
\end{flalign}
\end{theorem}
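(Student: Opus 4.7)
The plan is to view the perturbed linear time-varying model \eqref{good_key1} supplied by Proposition~\ref{good_prop} as an LTV system driven by $\bar\eta$, establish exponential stability of its state-transition matrix via Proposition~\ref{prop_4}, and then use variation-of-parameters to read off a convolution bound on $\overline\phi$, and hence on $\phi$. The tracking bound for $w=0$ then follows from Proposition~\ref{est_prop} once the convolution bound has been used to pin down $\|\phi\|$.

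The first step is to bound $\sum_{j=\tau}^{t-1}\|\nu(j)\|$ so that \eqref{xx} can supply the hypotheses of Proposition~\ref{prop_4}. The second inequality of Proposition~\ref{est_prop} together with the compactness of ${\cal S}$ gives $\sum \rho(j)e(j+1)^2/\|\phi(j-d+1)\|^2\leq 8\|{\cal S}\|^2+4\sum\rho(j)\overline w(j-d+1)^2/\|\phi(j-d+1)\|^2$; the pivotal observation is that whenever $\rho(j)=1$, the identity $\overline w(j-d+1)=e(j+1)+\phi(j-d+1)^\top\tilde\theta(j)$ combined with the definition of $\rho$ and $\|\tilde\theta\|\leq 2\|{\cal S}\|$ yields the pointwise bound $|\overline w(j-d+1)|\leq(4\|{\cal S}\|+\delta)\|\phi(j-d+1)\|$, so each noise term is uniformly controlled. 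Cauchy--Schwarz then produces $\sum\|\nu(j)\|\leq g_0+g_1(t-\tau)^{1/2}+g_2(t-\tau)$ with explicit constants. The main obstacle is that the resulting $g_2$ comes out proportional to $\delta$, which is in general too large to satisfy the requirement $g_2<(\mu-\sigma)/\gamma$ of Proposition~\ref{prop_4}. To handle this I would partition the horizon according to whether $\|\phi(j-d+1)\|$ exceeds a fixed multiple $M$ of $\|w\|_\infty$: on the large-$\|\phi\|$ sub-intervals the sharper estimate $\rho(j)\overline w(j-d+1)^2/\|\phi(j-d+1)\|^2\leq C_w^2\|w\|_\infty^2/M^2$ (with $C_w$ the $\ell^\infty$-gain of $z^d{\mathbf F}(z^{-1})$) drives the effective $g_2$ below any prescribed threshold once $M$ is large enough, so Proposition~\ref{prop_4} applies and yields $\|\boldsymbol\Phi_{\widetilde A_g+\Delta}(t,\tau)\|\leq\bar\gamma\mu^{t-\tau}$; on the small-$\|\phi\|$ sub-intervals the crude model of Proposition~\ref{prop_crude} bounds $\|\phi\|$ directly by a constant times the exogenous inputs, bypassing Proposition~\ref{prop_4} entirely, and the two regimes are spliced together using the uniform exponential growth bound from Proposition~\ref{prop_crude} across the short boundary pieces.

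Variation-of-parameters applied to \eqref{good_key1} next delivers $\|\overline\phi(t)\|\leq \bar\gamma\mu^{t-t_0}\|\overline\phi(t_0)\|+\sum_{j=t_0}^{t-1}\bar\gamma\mu^{t-1-j}\|\bar\eta(j)\|$; invoking \eqref{xx2}, the stability of the reference model to bound $|y^*|,|\overline y^*|$ in terms of $|r|$, the bounded $\ell^\infty$-gain of $z^d{\mathbf F}(z^{-1})$ to bound $|\overline w|$ in terms of $|w|$, and the already-established uniform boundedness of $\sum\|\nu\|$ to absorb the multiplicative factor in \eqref{xx2}, produces \eqref{th_bd1} after translating the bound on $\overline\phi$ back to $\phi$. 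For the noise-free tracking bound, specializing Proposition~\ref{est_prop} to $\overline w=0$ gives $\sum\rho(j)e(j+1)^2/\|\phi(j-d+1)\|^2\leq 8\|{\cal S}\|^2$; multiplying by the uniform bound on $\|\phi(j-d+1)\|^2$ from \eqref{th_bd1} yields $\sum e(j+1)^2\leq c'(\|x_0\|^2+\|r\|_\infty^2)$, while \eqref{error1} combined with the first inequality of Proposition~\ref{est_prop} and an interchange of sums gives $\sum\overline\varepsilon(t)^2\leq c''(\|x_0\|^2+\|r\|_\infty^2)$; finally, the $\ell^2$-BIBO stability of $1/{\mathbf L}(z^{-1})$ applied via \eqref{tfL} converts this into \eqref{errBD2}.
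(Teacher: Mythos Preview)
Your overall architecture---partition the horizon into ``good'' and ``bad'' pieces, use Proposition~\ref{prop_4} on the good pieces, and splice---matches the paper's. The genuine gap is in the \emph{partition criterion}. You split according to whether $\|\phi(j-d+1)\|$ exceeds $M\|w\|_\infty$, i.e.\ you compare $\|\phi\|$ against a \emph{global} sup-norm of the noise. On a bad sub-interval this gives only $\|\phi(j)\|\leq M\|w\|_\infty$, and when that quantity is fed into the gluing step you obtain a bounded-noise/bounded-state inequality, not the convolution bound \eqref{th_bd1}. Concretely, if $w$ has a single large spike in the distant past and is tiny near $t$, the convolution sum $\sum_{j}c\lambda^{t-j}|w(j)|$ is small, but your bad-interval estimate $M\|w\|_\infty$ stays large forever; the resulting bound cannot be of the form \eqref{th_bd1} with a uniform constant $c$.

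The paper avoids this by partitioning on the \emph{pointwise} ratio $|\overline w(j)|^2/\|\phi(j)\|^2$ against a fixed threshold $\mathfrak v$ independent of $w$. On $S_{\text{bad}}$ this yields directly $\|\phi(j)\|\leq \mathfrak v^{-1/2}|\overline w(j)|$, a pointwise bound in the \emph{current} noise value, which is exactly what the convolution structure requires when the good/bad pieces are concatenated. On $S_{\text{good}}$ the same pointwise ratio bound lets one choose $\mathfrak v$ small enough that the $g_2$ term in the $\sum\|\Delta(j)\|$ estimate (coming from $\sum\rho(j)\overline w(j-d+1)^2/\|\phi(j-d+1)\|^2\leq \mathfrak v(t-\tau+d-1)$) satisfies the hypothesis of Proposition~\ref{prop_4}. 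This resolves precisely the obstruction you identified---that the na\"ive $g_2$ is proportional to $\delta$---without ever invoking $\|w\|_\infty$. A secondary point: on the bad intervals you write that ``the crude model of Proposition~\ref{prop_crude} bounds $\|\phi\|$ directly by a constant times the exogenous inputs,'' but Proposition~\ref{prop_crude} only gives $\|A_1(t)\|\leq c_1$ with $c_1$ possibly large, so the crude model itself does not control $\|\phi\|$ over a bad interval of unbounded length; it is the pointwise definition of $S_{\text{bad}}$ that does the work, with the crude model used only to bridge one step across interval boundaries. Your argument for the noise-free tracking bound \eqref{errBD2} is essentially the same as the paper's and is fine.
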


\begin{remark}
The above result shows that the closed-loop
system experiences linear-like behavior. There is a uniform
exponential decay bound on the effect of the initial condition,
and a convolution bound on the effect of the exogenous inputs.
This implies that the system has a bounded gain (from $w$ and
$r$
to $y$) in every $p$-norm. For example, for $p = \infty $, we see
from the above bound that
\begin{equation}
\|\phi(t) \|
\leq
\tfrac{c}{1-\lambda} \left(
\lambda^{t-t_0} \|x_0 \|
+ \|w\|_\infty+ \|r\|_\infty,
\right)
\quad
t\geq t_0.
\nonumber
\end{equation}
\end{remark}

\begin{remark}
In the absence of noise, most
adaptive controllers 
provide 
that the tracking error is
square summable, e.g. see \cite{goodwinsin}. 
Here we prove a stronger result, namely, 
an upper bound on the $2$-norm
in terms of the size of $x_0 $ and $r$.
\end{remark}

\begin{proof}[{\bf Proof of Theorem \ref{thm1}}]
Fix $\delta\in(0,\infty] $ and $\lambda\in(\underline\lambda,1) $.
Let $t_0\in \Z $, $\theta \in {\cal S}_{ab}  $, $r,w \in \ellb_\infty $,
$\theta_0 \in {\cal S} $, and $x_0\in\R^{n+m+3d-2} $ be arbitrary.
Now choose $\sigma\in(\underline\lambda,\lambda ) $.
We will analyze \eqref{good_key1} to obtain
a bound on $\overline\phi(t) $.

Before proceeding, we see that there exists $\gamma_1 $ so that for every
$\widetilde A_g \in {\cal A} $,  
$\|\widetilde A_g^k \| \leq \gamma_1 \sigma^k, k\geq0.$
Also, we need to compute a bound on the sum
of $\|\Delta(\cdot ) \| $; since there are $d$ terms 
on the RHS of \eqref{xx}, 
by the Cauchy-Schwarz inequality
we obtain
\begin{flalign}
\sum_{j=\tau}^{t-1} \|\Delta(j) \|
&\leq
d c_2
\sum_{j=\tau+1 }^{t+d-1}
\|\nu(j+1) \|
\nonumber
\\
&\leq
d^2 c_2
\left[
\sum_{j=\tau+1 }^{t+d-1}
\|\nu(j+1) \|^2
\right]^{\frac{1}{2}}
(t-\tau +d-1 )^{\frac{1}{2}},
\nonumber
\\
&\qquad\quad
t>\tau \geq t_0;
\end{flalign}
but $(t_2-t_1+d-1)^\frac{1}{2} \leq d(t_2-t_1)^\frac{1}{2}, t_2>t_1 $, so
incorporating this and the definition of $\nu(\cdot) $ we have
\begin{flalign}
\sum_{j=\tau}^{t-1} \|\Delta(j) \|
&\leq
d^3 c_2
\left[
\sum_{j=\tau+1 }^{t+d-1}
\|\nu(j+1) \|^2
\right]^{\frac{1}{2}}
(t-\tau)^{\frac{1}{2}},
\nonumber
\\
&=
d^3 c_2
\left[
\sum_{j=\tau+2 }^{t+d}
\rho(j)
\frac{|e(j+1)|^2}{\|\phi(j-d+1) \|^2}
\right]^{\frac{1}{2}}
(t-\tau)^{\frac{1}{2}},
\nonumber
\\
&\qquad\quad
t>\tau \geq t_0.
\label{delta_bd1}
\end{flalign}
Also for ease of notation, let us define
\begin{equation}
\tilde w(t)
:=
\sum_{j=1}^{\max\{3,d+1\}} 
(
 |y^*(t+j)|
 +
|\overline y^*(t+d+j)|
+
|w(t+j)|+|\overline w(t+j)|).
\nonumber
\end{equation}

Now we consider the closed-loop system behavior. To proceed, 
we partition the
timeline into two parts: one in which the noise $\overline w(\cdot) $ is small versus 
$\phi(\cdot) $
and one where it is not.
To this end, 
with $\mathfrak{v}>0 $ to be chosen shortly, 
define
\[
S_{\text{\sf good}}:=
\left\{
j\geq t_0 : \phi(j) \neq0 \text{ and } \tfrac{|\overline w(j) |^2 }{\|\phi(j) \|^2 } < \mathfrak{v}
\right\},
\]
\[
S_{\text{\sf bad}}:=
\left\{
j\geq t_0 : \phi(j) =0 \text{ or } \tfrac{|\overline w(j) |^2 }{\|\phi(j) \|^2 } \geq \mathfrak{v}
\right\};
\]
clearly $\{j\in\Z : j\geq t_0 \} =S_{\text{\sf good}} \cup S_{\text{\sf bad}}$.\footnote{If the noise is zero, the $S_{\text{\sf good}}$ may be the whole timeline $[t_0,\infty) $.}
Observe that this partition implicitly 
depends on $\theta\in{\cal S}_{ab} $, as well as the initial
conditions. We will easily obtain bounds on the closed-loop system behavior on $S_{\text{\sf bad}}$; 
we will apply Proposition \ref{prop_4} to analyze the behavior on $S_{\text{\sf good}}$.
Before proceeding, we partition the timeline
into intervals which oscillate between $S_{\text{\sf good}}$
and $S_{\text{\sf bad}}$. To this end, it is easy to see
that we can define a (possibly infinite) sequence of intervals of the form $[k_i,k_{i+1}) $ satisfying:
(i) $k_0=t_0 $; (ii) $[k_i,k_{i+1}) $ either belongs to
$S_{\text{\sf good}}$ or $S_{\text{\sf bad}}$; and
(iii) if $k_{i+1}\neq \infty $ and $[k_i,k_{i+1}) $ belongs
to $S_{\text{\sf good}}$ (respectively, $S_{\text{\sf bad}}$), then the interval $[k_{i+1},k_{i+2}) $ must belong to
$S_{\text{\sf bad}}$ (respectively, $S_{\text{\sf good}}$).

Now we analyze the closed-loop behavior on each interval.

\noindent{\bf Case 1}: 
The behavior on $S_{\text{\sf bad}}$.

Let $j\in [k_i,k_{i+1})\subset S_{\text{\sf bad}} $ be arbitrary. In this case,
we have
either $\phi(j)=0 $ or $\tfrac{|\overline w(j) |^2}{\|\phi(j) \|^2 } \geq \mathfrak{v} $. 
In either case, we have
\begin{flalign}
\|\phi(j) \|\leq \tfrac{1}{\sqrt{\mathfrak{v}} } 
|\overline w(j) |,
\quad j\in[k_i,k_{i+1} );
\label{bad_bd1}
\end{flalign}
then from the crude model \eqref{crude1} 
and Proposition \ref{prop_crude},
we have
\begin{equation}
\|\phi(j+1) \|
\leq
\tfrac{c_1}{\sqrt{\mathfrak{v}} }
|\overline w(j) | 
+
c_1 |\overline y^*(j+d+1) |
+
c_1 |w(j+1) |,
\;\;
j\in[k_i,k_{i+1});
\nonumber
\end{equation}
combining this with \eqref{bad_bd1} yields:
\begin{equation}
\|\phi(j) \|
\leq 
\left\lbrace
\begin{matrix*}[l]
\tfrac{1}{\sqrt{\mathfrak{v}} }
|\overline w(j) |,
& j=k_i
\\ 
c_1 \left(\tfrac{1}{\sqrt{\mathfrak{v}} }+1\right)
[|\overline w(j-1) | 
+
\\
\quad
|\overline y^*(j+d) |
+
|w(j) |],
&j=k_i+1,\ldots,k_{i+1}.
\end{matrix*}
\right.
\nonumber
\end{equation}

\noindent{\bf Case 2}: 
The behavior on $S_{\text{\sf good}}$.

Suppose that $[k_i,k_{i+1} ) $ lies in $S_{\text{\sf good}}$;
notice that the bound on $\|\Delta(t) \| $ in \eqref{delta_bd1} 
occasionally extends outside $S_{\text{\sf good}}$;
so we handle the first $d+1$ and last $d+1$ times steps separately.

To this end, first suppose that $k_{i+1}-k_i\leq 2(d+1) $; 
then using the crude model on $\phi $ in \eqref{crude1} and Proposition \ref{prop_crude}, it is easy to show that if we define $\gamma_2:= \left(\frac{c_1}{\lambda}\right)^{2d+2} $, then we have
\begin{flalign} 
&\|\phi(t) \| \leq
\gamma_2 \lambda^{t-k_i} \|\phi(k_i) \|
+
\nonumber
\\
&
\sum_{j=k_i}^{t-1}
\gamma_2 \lambda^{t-j-1}
(|\overline y^*(j+d+1)|+|w(j+1)| ),
\;\; 
t\in[k_i,k_{i+1}].
\label{good_bd4}
\end{flalign}

Now suppose that $k_{i+1}-k_i> 2(d+1) $.
Define $\overline k_i:= k_i+d+1 $ and $\underline k_i:= k_{i+1}-d-1 $.
By the second part of Proposition \ref{est_prop} and using the facts 
that $\|\tilde \theta(t) \|\leq 2\|{\cal S} \|  $
and that 
$\frac{|\overline w(j)|^2 }{\|\phi(j) \|^2}< \mathfrak{v}   $ for $j\in[k_i,k_{i+1} ) $,
we obtain from \eqref{delta_bd1}:
\begin{flalign}
\sum_{j=\tau}^{t-1}
\|\Delta(j)\|
&\leq 
d^3c_2
\left[
8\|{\cal S} \|^2+
4\mathfrak{v}(t-\tau+d-1)
\right]^\frac{1}{2}
(t-\tau)^\frac{1}{2} ,
\nonumber
\\
&
\qquad \underline k_{i+1} \geq t>\tau \geq \overline k_i.
\nonumber
\end{flalign}
If we restrict $\mathfrak{v}\leq1 $, 
and define $\gamma_3:= d^3c_2\bigl(
\bigl[
8\|{\cal S} \|^2+
4(d-1)
\bigr]^\frac{1}{2} + 2
\bigr)
 $, 
then we obtain
\begin{equation}
\sum_{j=\tau}^{t-1}
\|\Delta(j)\|
\leq 
\gamma_3
(t-\tau)^\frac{1}{2}
+
\gamma_3
\mathfrak{v}^\frac{1}{2}
(t-\tau),
\quad \underline k_{i+1} \geq t>\tau \geq \overline k_i.
\nonumber
\end{equation}
We now apply Proposition \ref{prop_4}: 
set $g_0=0, g_1=\gamma_3, g_2=\gamma_3\mathfrak{v}^\frac{1}{2}, \mu=\lambda, \gamma=\gamma_1 $;
we need $\gamma_3\mathfrak{v}^\frac{1}{2}< \tfrac{\lambda-\sigma}{\gamma_1} $,
so if we set $\mathfrak{v}:= \min \left\{1, 
\tfrac{1}{2} \bigl(\tfrac{\lambda-\sigma}{\gamma_3\gamma_1}\bigr)^2 \right\} $,
then from Proposition \ref{prop_4}
we see that there exists a constant $\gamma_4 $ so that
the state transition 
matrix $\boldsymbol\Phi_{\widetilde A_g+\Delta }(t,\tau)  $ satisfies
\begin{equation}
\|\boldsymbol\Phi_{\widetilde A_g+\Delta }(t,\tau)\|
\leq
\gamma_4 \lambda^{t-\tau},
\quad \underline k_{i+1} \geq t>\tau \geq \overline k_i.
\label{state_bd1}
\end{equation}
Before solving \eqref{good_key1}, we obtain a bound on $\bar\eta(t) $;
from Proposition \ref{est_prop}, 
we see that 
$\|\nu(t)\| \leq \sqrt{8\|{\cal S} \|^2+4\mathfrak{v} }
,\underline k_{i+1} \geq t \geq \overline k_i  $,
so there exists a constant $\gamma_5 $ so that
$\|\bar\eta(t) \|
\leq
\gamma_5
\tilde w(t),
\;
\underline k_{i+1} \geq t \geq \overline k_i.$
Then, using the bound in \eqref{state_bd1} to
solve \eqref{good_key1} we see that there exists a constant $\gamma_6 $ 
so that
\begin{equation} 
\|\overline \phi(t) \| 
\leq
\gamma_6 \lambda^{t-k_i} \|\overline \phi(\overline k_i) \|
+
\sum_{j=k_i}^{t-1}
\gamma_6 \lambda^{t-j-1}
\tilde w(j),
\quad 
t\in[\overline k_i,\underline k_{i+1}].
\label{good_bd5}
\end{equation}
We want to have a bound on the whole interval $[k_i,k_{i+1} ) $,
and we would like it to be in terms of $\phi $ instead of $\overline\phi $.
First, we
use the crude model on $\overline\phi(\cdot) $ in \eqref{crude_no3} 
and Proposition \ref{prop_crude} to find bounds 
on $\|\overline\phi(t) \| $ for $t\in[k_i, \overline k_i  ] $ and
for $t\in[\underline k_{i+1}, k_{i+1}  ] $, 
and on $\|\overline\phi(\overline k_i) \|$ in terms of $\|\overline\phi(k_i) \|$  
and combine them 
with \eqref{good_bd5} 
to see that there exists a constant $\gamma_7 $
so that
\begin{equation} 
\|\overline \phi(t) \| 
\leq
\gamma_7 \lambda^{t-k_i} \|\overline \phi(k_i) \|
+
\sum_{j=k_i}^{t-1}
\gamma_7 \lambda^{t-j-1}
\tilde w(j),
\quad 
t\in[ k_i, k_{i+1}].
\label{good_bd6}
\end{equation}
Next we obtain a bound in terms of $\phi$.
It is obvious that $ \|\phi(t) \|\leq\|\overline\phi(t) \|,t\in[ k_i, k_{i+1}]  $.
Then from the definitions of $\overline\phi(\cdot),\overline\zeta(\cdot)  $ and 
$\zeta(\cdot) $, it is easy to see that there exists a constant $c_3 $ such that
$\|\overline \phi(k_i) \|
\leq
c_3
\sum_{j=1}^{d+1}
\|\phi(k_i+j) \|
+
c_3
\sum_{j=-n'+2}^{d+1} |y^*(k_i+j)|;$
we
use the crude model on $\phi(\cdot) $ in \eqref{crude1} and Proposition \ref{prop_crude} 
to obtain bounds on $\|\phi(k_i+j) \|,j=1,2,\ldots,d+1 $, 
in terms of $\|\phi(k_i) \|$.
Incorporating all of the above into \eqref{good_bd6} and 
after simplification, we
see that there exists a constant $\gamma_8 $ so that
\begin{flalign} 
&\| \phi(t) \| 
\leq
\gamma_8 \lambda^{t-k_i} \| \phi(k_i) \|
+
\nonumber
\\
&
\sum_{j=k_i}^{t-1}
\gamma_8 \lambda^{t-j-1}
\tilde w(j)+
\gamma_8
\sum_{q=1}^{n'-2} |y^*(k_i-q)|,
\quad 
t\in[ k_i, k_{i+1}],
\nonumber
\end{flalign}
which we combine with \eqref{good_bd4} 
to conclude Case 2.

We now glue together the bounds obtained on $S_{\text{\sf good}}$ and $S_{\text{\sf bad}}$
to
obtain a bound which holds on all of $[t_0,\infty )$
using the identical argument used in gluing together similar 
bounds in the proof of Theorem 1 of \cite{mcss20}.
Last of all, we simplify the resulting 
quantity by using causality arguments
to remove extraneous terms, and end up with
the bound in \eqref{th_bd1}.

Finally we prove asymptotic tracking.
Suppose that $w=0$; then we see from the estimation algorithm
that in this case, $\rho(t)=1 \Leftrightarrow \|\phi(t-d) \|\neq0 $.
So from \eqref{error1}, the first part of Proposition \ref{est_prop}, and the Cauchy-Schwarz inequality,
we have 
\[
\rho(t-1)
\frac{\overline \varepsilon(t)^2}{\|\phi(t-d) \|^2}
\leq
d \sum_{j=0,\|\phi(t-j-d) \|\neq0}^{d-1}
\frac{e(t-j)^2}{\|\phi(t-j-d) \|^2};
\]
so by the second part of Proposition \ref{est_prop} 
we can see that
\begin{flalign}
&\sum_{t=t_0+d,\|\phi(t-d) \|\neq0 }^\infty
\frac{\overline \varepsilon(t)^2}{\|\phi(t-d) \|^2}
\leq
8d^2\|{\cal S}\|^2.
\nonumber
\end{flalign}
Then it is easy to see
by the boundedness of $\phi $ proven in \eqref{th_bd1}
and by the fact that $\overline \varepsilon(t)=0 $ when $\phi(t-d)=0 $,
 that  
 \begin{equation}
\sum_{t=t_0+d}^\infty
\overline \varepsilon(t)^2
\leq
8d^2\|{\cal S}\|^2\max_{j\geq t_0-d} \|\phi(j) \|^2
\leq
\left(\tfrac{4d\|{\cal S} \|c}{1-\lambda}\right)^2
[ \|x_0 \|^2+ \|r\|_\infty^2].
\nonumber
\end{equation}
But $\varepsilon $ and $\overline\varepsilon $
are related by a stable transfer function 
(see \eqref{tfL}),
so if we apply Parseval's Theorem then we 
obtain a bound on $\varepsilon $ of the form
given in \eqref{errBD2}.
\end{proof}

\section{Robustness}

It turns out that the convolution bounds
proven in Theorem \ref{thm1} will guarantee robustness to
a degree of time-variations and unmodelled dynamics.
We have shown that
the corresponding model reference adaptive controller provides
a convolution bound with gain $c$
 and decay rate $\lambda$ when
applied to the time-invariant nominal plant; we can apply
Theorems 1 and 2 of \cite{ccta20} to
show that, in the presence of a degree of time-variation
(slow enough parameter time-variations and/or occasional
jumps) and small enough unmodelled dynamics,
the controller still provides linear-like properties.
Furthermore, we can also obtain 
bounds on the average tracking error both
in the case of no noise under slow time-variations,
as well as in the noisy case, by adapting 
arguments used
in the proofs of Theorems 4 and 5 of \cite{mcss20}.

\section{A Simulation Example}
\unskip
We now provide a simulation example to illustrate the
results of this paper. Consider the time-varying plant
\begin{multline*}
y(t + 1) =
 -a_1(t)y(t) - a_2(t)y(t - 1)
+ 
\\
b_0(t)u(t) + b_1(t)u(t - 1) + w(t),
\end{multline*}
with $a_1(t) \in [-2, 2], a_2(t) \in [-2, 2], b_0(t) \in [\tfrac{3}{2}, 5]$ and
$b_1(t) \in [-1, 1]$. 
Note here that the delay $d=1$.
We want to apply an adaptive controller such that the closed-loop
system follows the behavior of a reference model \eqref{plantRef1} with $n'=2 $;
following the discussion at the beginning of Section \ref{sec2},
we transform the plant into the predictor form by way of long division:
we see that
$\alpha_0(t)=a_1(t)-l_1$,
$ \alpha_1(t)=a_2(t)-l_2$, $\beta_0(t)=b_0(t)$, and $\beta_1(t)=b_1(t)$.
We choose a reference model 
represented by
${\mathbf L}(z^{-1}):=1-\frac{1}{2}z^{-2},
$ and $
{\mathbf H}(z^{-1}):=\frac{1}{2},$
which has poles in the open unit disk 
as required;
then we can set
\begin{multline*}
{\cal S}:=
{\cal S}_{\alpha\beta}=
\biggl\{
\left[
\begin{smallmatrix}
\alpha_0\\\alpha_1\\\beta_0\\\beta_1
\end{smallmatrix}
\right] \in\R^4
:
\alpha_0\in[-2,2], \alpha_1\in[-\tfrac{5}{2},\tfrac{3}{2}],
\\
 \beta_0\in [\tfrac{3}{2}, 5], \beta_1\in [-1, 1]
\biggr\}.
\end{multline*}
We apply the adaptive controller \eqref{est1} and \eqref{control1}
(with $\delta =\infty $) 
to this plant 
with the plant parameters given by:
$a_1(t) 
= 2 \cos(\tfrac{1}{100}t)$,
$a_2(t) = -2 \sin(\tfrac{1}{300}t)$,
$b_0(t) 
= \tfrac{13}{4} - \tfrac{7}{4} \cos(\tfrac{1}{125}t)$,
and
$b_1(t) = -\cos(\tfrac{1}{50}t)$,
and the disturbance given by:
$$
w(t) =
\left\{
\begin{smallmatrix*}[l]
\tfrac{1}{10} \cos(10t), & 200 < t \leq 500
\\
0, & \text{otherwise}.
\end{smallmatrix*}
\right.
.$$
We set
$r(t) $ to be a unit square wave with period of $200$ steps.
We set $y(-1) = y(0) = -1, u(-1) = 0$, and the initial parameter
estimates to the midpoint of the respective intervals.
Figures \ref{fig1} and \ref{fig2} show the results. 
The controller does
a good job of tracking when there is no disturbance;
the tracking degrades when the disturbance enters the system 
but
tracking performance improves when the disturbance returns
to zero. 
You can also see that the estimator tracks the
time-varying parameters fairly well.

\begin{figure}
\center\includegraphics[trim=30 10 30 20,clip,width=.73\columnwidth]{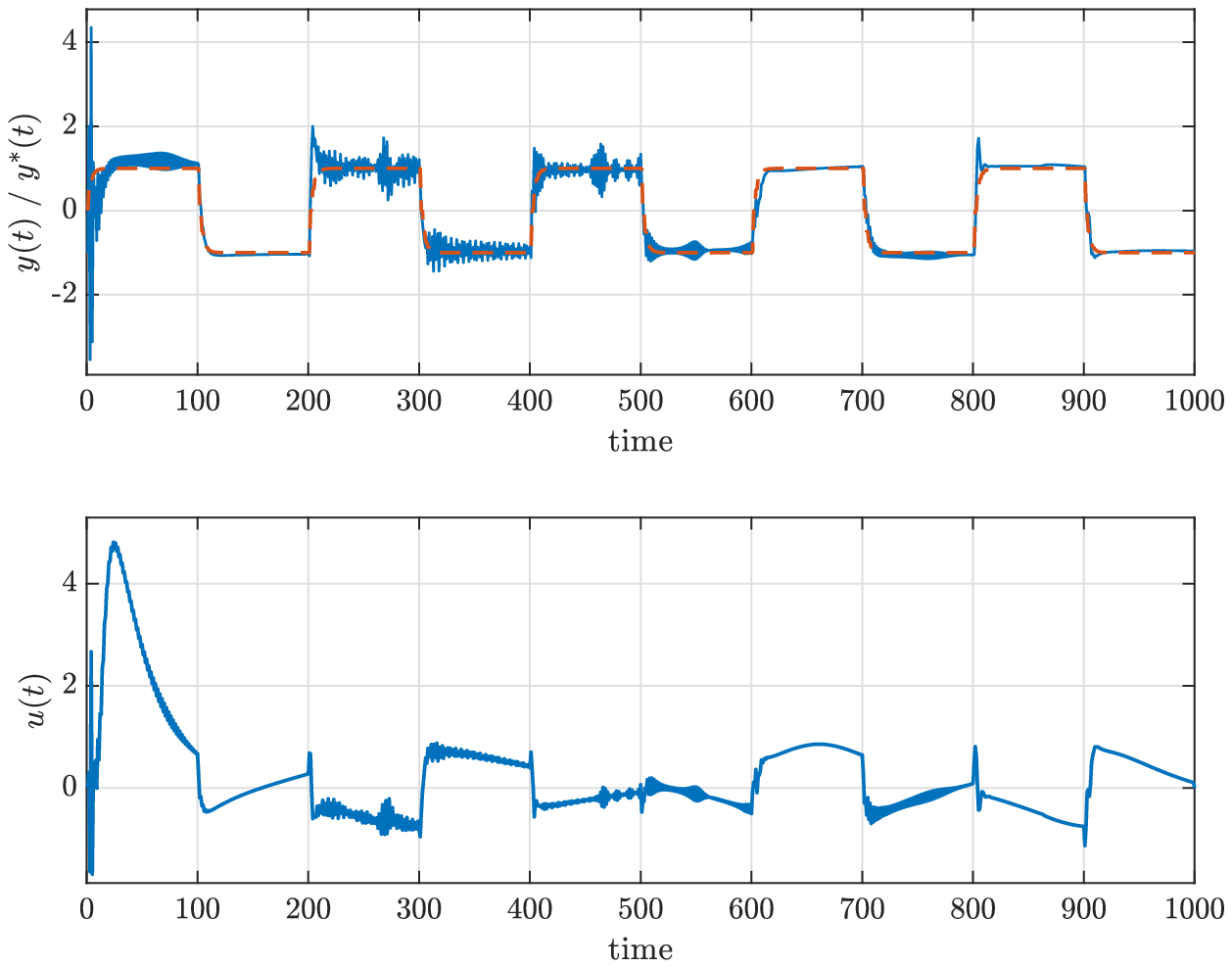}
\vspace{-1em}
\caption{
The first plot shows both $y(t)$ (solid) and $y^*(t) $ (dashed);
the second plot shows the control input $u(t)$.
}
\vspace{-1.2em}
\label{fig1}
\end{figure}
\begin{figure}
\center\includegraphics[trim=30 20 30 20,clip,width=.73\columnwidth]{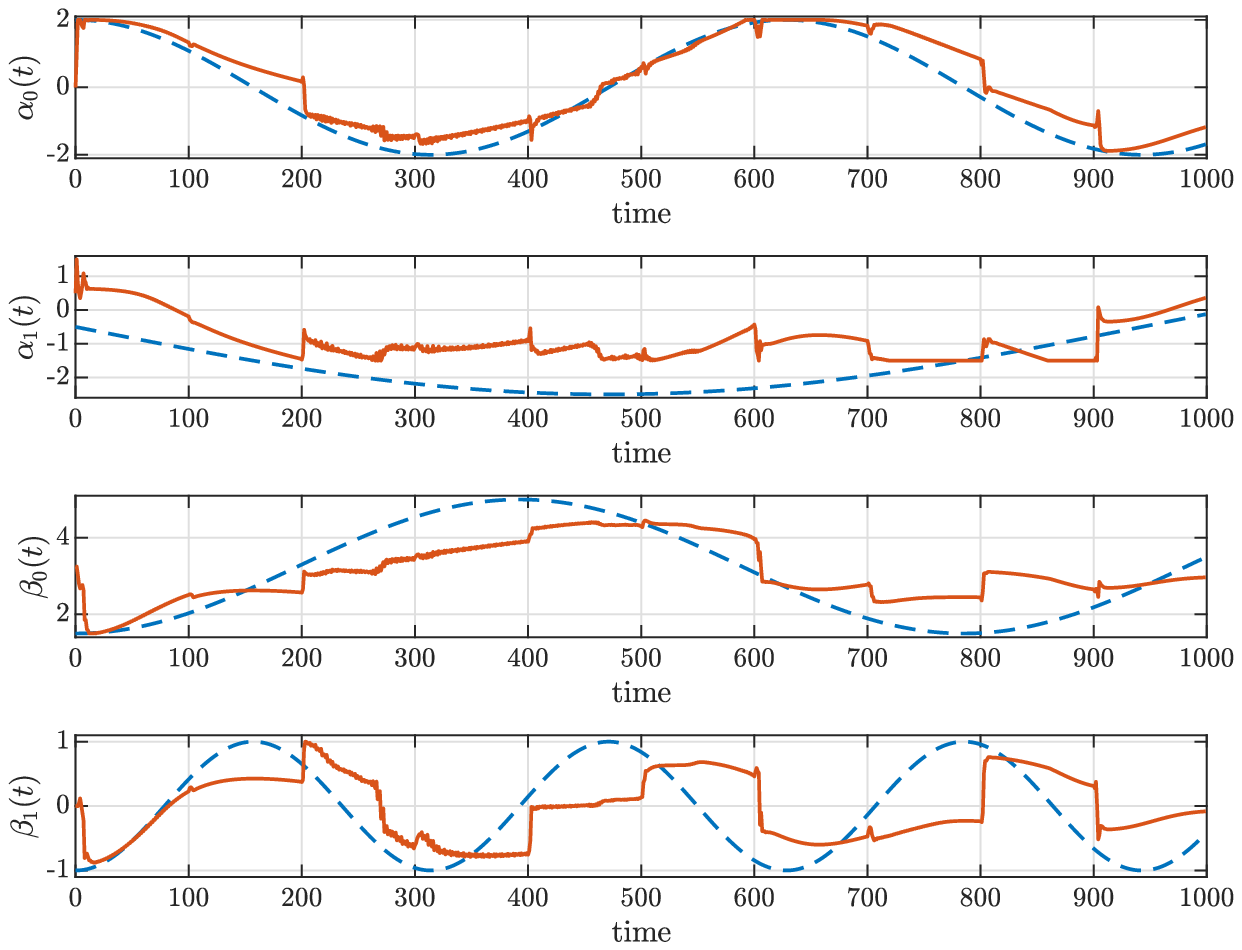}
\vspace{-1.5em}
\caption{
The plots show the parameter estimates $\hat\theta(t) $ (solid)
as well as the actual parameters $\theta^* $ (dashed).
}
\vspace{-1.5em}
\label{fig2}
\end{figure}

\vspace{-.7em}

\section{Conclusion}
\unskip
In this paper,
we show that
a model reference adaptive controller 
using a parameter estimator based on the original 
projection algorithm 
provides desirable {\bf linear-like closed-loop properties}:
exponential stability, a bounded gain on the noise in every $p$-norm,
and a convolution bound on the exogenous inputs;
this is never found in the literature,
except in our earlier work.
This can be leveraged to directly 
prove tolerance to a degree of parameter time-variations and
a degree of unmodelled dynamics.
Also, in the noise-free case, asymptotic tracking is achieved
with an explicit upper bound on the square sum of the tracking error.

We would like to extend these linear-like results 
to the case when the sign of the high-frequency gain is unknown
by using multiple estimators along the lines 
of \cite{cdc18} and \cite{tac20}.
At present, our proof does not extend to that case
since the key transition matrix is not deadbeat,
but we are working on an alternative proof approach.

\vspace*{-.3cm}

\appendix
\unskip
\begin{proof}[\bf Proof of Proposition \ref{good_prop}]
\unskip
First, define a square matrix 
  of size $(n+m+d+n'(d+1))$,
$P:=
\left[
\begin{matrix}
I_{n+m+d} & 
\\
 & {\boldsymbol 0}
\end{matrix}
\right];$
then we have
\begin{equation}
\overline\phi(t)^\top P=\begin{bmatrix}
\phi(t)^\top & {\boldsymbol 0} \end{bmatrix},
\label{Peq1}
\end{equation}
so
$\overline\phi(t)^\top P \overline\phi(t) = \phi(t)^\top \phi(t)= \|\phi(t)\|^2 .$
Now define
\begin{equation}
	\tilde\Delta(t):=\rho(t-1)
	\frac{\overline\varepsilon(t)}{\Vert\phi(t-d)\Vert^2} 
	\e_{n+m+d+1}
	\overline \phi(t-d)^\top P
	,\;\;
	t\geq t_0+1,
	\label{deltaj_def}
  \end{equation}
and $\eta_0(t):= [1-\rho(t-1)]\overline \varepsilon(t), t\geq t_0 +1$;
then by using these definitions, we can represent the 
term
containing
$\overline\varepsilon(t+d+2)$
in the RHS of \eqref{goodmodel3}
as: 
\begin{flalign}
\e_{n+m+d+1} \overline\varepsilon(t+d+2)
&=
\tilde\Delta(t+d+2) \overline \phi(t+2) 
	+ 
\nonumber
\\
&\qquad
	\e_{n+m+d+1} \eta_0(t+d+2) .	
	\label{vare_terms1}	
\end{flalign}
We now use \eqref{crude_no3} (which is valid for $t\geq t_0 $)
to represent $\overline\phi(t+2) $
in the RHS of \eqref{vare_terms1}
in terms $\overline\phi(t) $;
so if we do this and incorporate the result into \eqref{vare_terms1},
then we are now able to rewrite the model \eqref{goodmodel3}
in the desired form of \eqref{good_key1}
by defining
\begin{flalign}
\label{deltabig_def}
\Delta(t)
&:=
\tilde\Delta(t+d+2)
A_2(t+1)A_2(t)
,
\quad t\geq t_0,
\end{flalign}
and by grouping 
the remaining terms (containing exogenous signals) and
defining
\begin{flalign}
&\bar\eta(t)
:=
\eta(t)
+
\e_{n+m+d+1} \eta_0(t+d+2)
+
\nonumber
\\
&\quad
\tilde\Delta(t+d+2)
\biggl[
A_2(t+1)B_5(t)\overline y^*(t+d+1)+
\nonumber
\\
&\quad
A_2(t+1)B_6(t)w(t+1)+
\nonumber
\\
&\quad
\bigl(A_2(t+1)B_7(t)+
B_5(t+1)\bigr)\overline y^*(t+d+2)
+
\nonumber
\\
&\quad
\bigl(A_2(t+1)B_8(t)+B_6(t+1)\bigr)w(t+2)
+
\nonumber
\\
& \quad
A_2(t+1) \e_{n+m+d+1} \overline w(t+2)+
B_7(t+1)\overline y^*(t+d+3)
+
\nonumber
\\
&\quad
B_8(t+1)w(t+3)+
\e_{n+m+d+1} \overline w(t+3)
\biggr],
\label{eta_def1}
\end{flalign}
concluding the first part of the proof.

We now prove the desired bound on $\Delta(t) $. 
By \eqref{Peq1}
we obtain from \eqref{deltaj_def}: 
$\|\tilde\Delta(t) \| \leq \rho(t-1)
\frac{|\overline\varepsilon(t)|}{\Vert\phi(t-d)\Vert},
\; t\geq t_0+1; $
from \eqref{error1} and the first part of Proposition \ref{est_prop}, we can see that
\begin{flalign}
&\|\tilde\Delta(t) \| \leq
\rho(t-1)
\frac{|\overline\varepsilon(t)|}{\Vert\phi(t-d)\Vert}
\nonumber
\\
&\leq 
\rho(t-1)
\frac{|e(t)|}{\Vert\phi(t-d)\Vert}
+
\|\hat\theta(t-1)-\hat\theta(t-d) \|
\nonumber
\\
&\leq 
\sum_{j=0}^{d-1}
\rho(t-1-j)
\frac{|e(t-j)|}{\Vert\phi(t-d-j)\Vert}
=
\sum_{j=1}^{d}
\|\nu(t-j)\|,
\nonumber
\\
&
\quad 
t\geq t_0+d
.
\label{prede_bound2}
\end{flalign}
Then from \eqref{deltabig_def},  
using the bound in \eqref{prede_bound2} and Proposition 2, 
we can easily show that there exist a constant so that \eqref{xx} is proven.

Finally, we prove the bound on $\bar\eta $.
From \eqref{error2} and the definition of $\rho(\cdot) $,
it is easy to show that
$|\eta_0(t)|
\leq 
\bigl(
1+ \tfrac{4\|{\cal S}\| }{\delta}
\bigr)
|\overline w(t-d) |,
\;
t\geq t_0 +d ;$
then 
by incorporating this bound 
and using the definition of $\eta(t) $ in \eqref{eta_2}
along with the bound in \eqref{xx}
into \eqref{eta_def1},
it is easy to see that there exists a constant
so that we obtain the desired bound \eqref{xx2}.
\end{proof}

\vspace*{-1.5em}

\bibliographystyle{IEEEtranSmod_v2}
\bibliography{mrac_refs_arxiv}           

\begin{thebibliography}{10}
\providecommand{\url}[1]{#1}
\csname url@samestyle\endcsname
\providecommand{\newblock}{\relax}
\providecommand{\bibinfo}[2]{#2}
\providecommand{\BIBentrySTDinterwordspacing}{\spaceskip=0pt\relax}
\providecommand{\BIBentryALTinterwordstretchfactor}{4}
\providecommand{\BIBentryALTinterwordspacing}{\spaceskip=\fontdimen2\font plus
\BIBentryALTinterwordstretchfactor\fontdimen3\font minus
  \fontdimen4\font\relax}
\providecommand{\BIBforeignlanguage}[2]{{%
\expandafter\ifx\csname l@#1\endcsname\relax
\typeout{** WARNING: IEEEtranS.bst: No hyphenation pattern has been}%
\typeout{** loaded for the language `#1'. Using the pattern for}%
\typeout{** the default language instead.}%
\else
\language=\csname l@#1\endcsname
\fi
#2}}
\providecommand{\BIBdecl}{\relax}
\BIBdecl

\bibitem{akhtar}
S.~{Akhtar} and D.~S. {Bernstein}, ``Lyapunov-stable discrete-time model
  reference adaptive control,'' in \emph{Proc. Amer. Control Conf.}, Jun. 2005,
  pp. 3174--3179.

\bibitem{morse1978}
A.~Feuer and A.~S. Morse, ``{Adaptive control of single-input, single-output
  linear systems},'' \emph{IEEE Trans. Autom. Control}, vol.~23, no.~4, pp.
  557--569, Aug. 1978.

\bibitem{Goodwin1980}
G.~C. Goodwin, P.~Ramadge, and P.~Caines, ``{Discrete-time multivariable
  adaptive control},'' \emph{IEEE Trans. Autom. Control}, vol.~25, no.~3, pp.
  449--456, Jun. 1980.

\bibitem{goodwinsin}
G.~C. Goodwin and K.~S. Sin, \emph{{Adaptive Filtering Prediction and
  Control}}.\hskip 1em plus 0.5em minus 0.4em\relax New York, NY, USA: Dover
  Publications, Inc., 1984.

\bibitem{Ioa86}
P.~A. Ioannou and K.~S. Tsakalis, ``{A robust direct adaptive controller},''
  \emph{IEEE Trans. Autom. Control}, vol.~31, no.~11, pp. 1033--1043, Nov.
  1986.

\bibitem{kreiss}
G.~Kreisselmeier and B.~D.~O. Anderson, ``{Robust model reference adaptive
  control},'' \emph{IEEE Trans. Autom. Control}, vol.~31, no.~2, pp. 127--133,
  Feb. 1986.

\bibitem{kreiss2}
G.~Kreisselmeier, ``{Adaptive control of a class of slowly time-varying
  plants},'' \emph{Syst. Control Lett.}, vol.~8, no.~2, pp. 97--103, Dec. 1986.

\bibitem{hanfu}
Y.~Li and H.-F. Chen, ``{Robust adaptive pole placement for linear time-varying
  systems},'' \emph{IEEE Trans. Autom. Control}, vol.~41, no.~5, pp. 714--719,
  May 1996.

\bibitem{rick}
R.~H. Middleton and G.~C. Goodwin, ``{Adaptive control of time-varying linear
  systems},'' \emph{IEEE Trans. Autom. Control}, vol.~33, no.~2, pp. 150--155,
  1988.

\bibitem{rick2}
R.~H. Middleton, G.~C. Goodwin, D.~J. Hill, and D.~Q. Mayne, ``{Design issues
  in adaptive control},'' \emph{IEEE Trans. Autom. Control}, vol.~33, no.~1,
  pp. 50--58, Jan. 1988.

\bibitem{ld_paper}
D.~E. Miller, ``On necessary assumptions in discrete-time model reference
  adaptive control,'' \emph{Int. J. Adaptive Control Signal Process.}, vol.~10,
  no.~6, pp. 589--602, 1996.

\bibitem{scl17}
D.~E. Miller, ``{A parameter adaptive controller which provides exponential
  stability: The first order case},'' \emph{Syst. Control Lett.}, vol. 103, pp.
  23--31, May 2017.

\bibitem{ccta17}
D.~E. Miller, ``{Classical discrete-time adaptive control revisited:
  Exponential stabilization},'' in \emph{Proc. IEEE Conf. Control Technol.
  Appl.}\hskip 1em plus 0.5em minus 0.4em\relax IEEE, Aug. 2017, pp.
  1975--1980.

\bibitem{mcss18}
D.~E. Miller and M.~T. Shahab, ``Classical pole placement adaptive control
  revisited: linear-like convolution bounds and exponential stability,''
  \emph{Math. Control Signals Syst.}, vol.~30, no.~4, p.~19, Nov. 2018.

\bibitem{acc19}
D.~E. {Miller} and M.~T. {Shahab}, ``Classical $d$-step-ahead adaptive control
  revisited: Linear-like convolution bounds and exponential stability,'' in
  \emph{Proc. Amer. Control Conf.}, Jul. 2019, pp. 417--422.

\bibitem{mcss20}
D.~E. {Miller} and M.~T. {Shahab}, ``Adaptive tracking with exponential
  stability and convolution bounds using vigilant estimation,'' \emph{Math.
  Control Signals Syst.}, vol.~32, pp. 241--291, 2020.

\bibitem{Morse1980}
A.~S. Morse, ``{Global stability of parameter-adaptive control systems},''
  \emph{IEEE Trans. Autom. Control}, vol.~25, no.~3, pp. 433--439, Jun. 1980.

\bibitem{Naik}
S.~M. Naik, P.~R. Kumar, and B.~E. Ydstie, ``{Robust continuous-time adaptive
  control by parameter projection},'' \emph{IEEE Trans. Autom. Control},
  vol.~37, no.~2, pp. 182--197, 1992.

\bibitem{Narendra1980}
K.~S. Narendra and Y.-H. Lin, ``{Stable discrete adaptive control},''
  \emph{IEEE Trans. Autom. Control}, vol.~25, no.~3, pp. 456--461, Jun. 1980.

\bibitem{Narendra1980_pt2}
K.~S. Narendra, Y.-H. Lin, and L.~Valavani, ``{Stable adaptive controller
  design, part II: Proof of stability},'' \emph{IEEE Trans. Autom. Control},
  vol.~25, no.~3, pp. 440--448, Jun. 1980.

\bibitem{rohrs}
C.~Rohrs, L.~Valavani, M.~Athans, and G.~Stein, ``{Robustness of
  continuous-time adaptive control algorithms in the presence of unmodeled
  dynamics},'' \emph{IEEE Trans. Autom. Control}, vol.~30, no.~9, pp. 881--889,
  Sep. 1985.

\bibitem{cdc18}
M.~T. {Shahab} and D.~E. {Miller}, ``Multi-estimator based adaptive control
  which provides exponential stability: The first-order case,'' in \emph{Proc.
  IEEE Conf. Decis. Control}, Dec. 2018, pp. 2223--2228.

\bibitem{ccta20}
M.~T. Shahab and D.~E. Miller, ``{The Inherent Robustness of a New Approach to
  Adaptive Control},'' in \emph{Proc. IEEE Conf. Control Technol. Appl.}, Aug.
  2020, pp. 510--515.

\bibitem{tac20}
M.~T. Shahab and D.~E. Miller, ``{Asymptotic Tracking and Linear-like Behavior
  Using Multi-Model Adaptive Control},'' \emph{IEEE Trans. Autom. Control},
  2021, to appear.

\bibitem{Tsakalis4}
K.~S. Tsakalis and P.~A. Ioannou, ``{Adaptive control of linear time-varying
  plants: a new model reference controller structure},'' \emph{IEEE Trans.
  Autom. Control}, vol.~34, no.~10, pp. 1038--1046, 1989.

\bibitem{Wen}
C.~Wen, ``{A robust adaptive controller with minimal modifications for discrete
  time-varying systems},'' \emph{IEEE Trans. Autom. Control}, vol.~39, no.~5,
  pp. 987--991, May 1994.

\bibitem{Wenhill}
C.~Wen and D.~J. Hill, ``{Global boundedness of discrete-time adaptive control
  just using estimator projection},'' \emph{Automatica}, vol.~28, no.~6, pp.
  1143--1157, Nov. 1992.

\bibitem{ydstie}
B.~E. Ydstie, ``{Transient performance and robustness of direct adaptive
  control},'' \emph{IEEE Trans. Autom. Control}, vol.~37, no.~8, pp.
  1091--1105, 1992.

\end{thebibliography}

\end{document}